\newenvironment{proof}
  {\par \noindent {\bf Proof\ }}
  {\hfill $\Box$ \par \medskip}
\newtheorem{theorem}{Theorem}[section]
\newtheorem{lemma}{Lemma}[section]
\newtheorem{proposition}{Proposition}[section]
\newtheorem{remark}{Remark}[section]
\newcommand{\BE}{\begin{equation}}
\newcommand{\EE}{\end{equation}}
\newcommand{\cE}{\mathcal{E}}
\newcommand{\cF}{\mathcal{F}}
\newcommand{\cI}{\mathcal{I}}
\newcommand{\cO}{\mathcal{O}}
\newcommand{\mR}{\mathbb{R}}
\newcommand{\e}{\mathrm{e}}
\newcommand{\veps}{\varepsilon}
\newcommand{\BB}{\boldsymbol{B}}
\newcommand{\FF}{\boldsymbol{F}}
\newcommand{\ff}{\boldsymbol{f}}
\newcommand{\xx}{\boldsymbol{x}}
\newcommand{\pp}{\boldsymbol{p}}
\newcommand{\uu}{\boldsymbol{u}}
\newcommand{\ww}{\boldsymbol{w}}
\newcommand{\nnu}{\boldsymbol{\nu}}
\newcommand{\ssg}{\boldsymbol{\sigma}}
\newcommand{\aalpha}{\boldsymbol{\alpha}}
\newcommand{\mm}{\boldsymbol{m}}
\newcommand{\rrho}{\boldsymbol{\rho}}
\newcommand{\qq}{\boldsymbol{q}}
\newcommand{\JJ}{\boldsymbol{J}}
\newcommand{\zzero}{\boldsymbol{0}}
\newcommand{\bk}[1]{{\langle #1 \rangle}}
\newcommand{\abs}[1]{{\vert {#1} \vert}}
\newcommand{\pt}{\partial}
\newcommand{\eq}{\mathrm{eq}}
\DeclareMathOperator{\Li}{\mathrm{Li}}
\title{\Large \bf 
Hydrodynamic equations for electrons in graphene 
obtained from the maximum entropy principle}
\author {Luigi Barletti}
\date{
\small
Dipartimento di Matematica e Informatica ``Ulisse Dini''
\\
\small
Universit\`a degli Studi di Firenze
\\
\small
luigi.barletti@unifi.it
}
\begin{document}
\maketitle

\begin{abstract}
The maximum entropy principle is applied to the formal derivation of 
isothermal, Euler-like equations for semiclassical fermions (electrons and holes) in graphene.
After proving general mathematical properties of the equations so obtained, their asymptotic 
form corresponding to significant physical regimes is investigated. 
In particular, the diffusive regime, the Maxwell-Boltzmann regime (high temperature), 
the collimation regime and the degenerate gas limit (vanishing temperature) are considered.
\end{abstract}

\section{Introduction}
Graphene is a 2-dimensional crystal consisting of a single-layer honeycomb 
lattice of carbon atoms. 
Most of the interesting electronic properties of graphene derive from the conical shape of energy bands in the
vicinity of the so-called {\em Dirac points} in the electron pseudomomentum space.
Close to such points, electrons are described, with good approximation, by the Hamiltonian  \cite{CastroNeto09}
\BE
\label{H}
  H(\xx,\pp) = c \,\pp\cdot\ssg + V(\xx) \sigma_0.
\EE
Here, $\xx = (x_1,x_2,0)$ and $\pp = (p_1,p_2,0)$ are the coordinates of the electron position and pseudomomentum (measured relatively to the Dirac point), $c \sim 10^6\mathrm{m}/\mathrm{s}$ is the Fermi velocity, and $V(\xx)$ 
is an external/self-consistent electric potential.
Moreover, $\sigma_0$ denotes the $2\times 2$ identity matrix and, as usual, $\ssg = (\sigma_1,\sigma_2,\sigma_3)$ 
is the vector of Pauli matrices.
Note that \eqref{H} is a Dirac-like Hamiltonian \cite{Thaller92} for 2-dimensional massless particles with an 
``effective light speed'' $c$, which is about $1/300$ of the speed of light in vacuum, and subject to electric forces.
The spin-like degree of freedom associated to the Hamiltonian \eqref{H} is called ``pseudospin'' and is 
related to the decomposition of the graphene honeycomb lattice into two non-equivalent triangular lattices 
\cite{CastroNeto09} (note that, although the continuous degrees of freedom ($\xx$ and $\pp$) are 2-dimensional, 
the pseudospin vector is 3-dimensional). 
The electronic energy bands, i.e.\ the eigenvalues of $H$ evaluated at $V=0$, are given by
\BE
\label{bands}
  E_\pm(\pp) = \pm c \abs{\pp},
\EE
and have the above-mentioned conical shape. 
The semiclassical velocities associated to the energy bands \eqref{bands}  are
\BE
\label{vel}
  \nabla_{\pp} E_\pm(\pp) = \pm c\, \frac{\pp}{ \abs{\pp}},
\EE
showing that, from a semiclassical viewpoint, electrons move at constant speed $c$.
\par
\smallskip
When dealing with a statistical population of electrons in graphene, a kinetic (Boltzmann-like) approach 
can provide very accurate models  (see e.g.\ Ref.\ \cite{Lichtenberger11}) but requires a considerable
computational effort, especially in dimension 2 or 3.
However, hydrodynamic or diffusive models (where, by  ``hydrodynamic''  we mean a description 
in terms of Euler-like equations) can offer a good enough
accuracy at a much lower computational cost. 
Different kinds of hydrodynamic models for graphene are found in literature.
In Refs.\ \cite{Mueller09} and \cite{Svintsov12}, semiclassical Euler equations with Fermi-Dirac statistics 
are obtained in the linear-response approximation. 
The two papers, however, differ in the choice of the macroscopic moments that characyerize the fluid: 
in Ref.\ \cite{Svintsov12} such moments are the density and the average pseudomomentum while, 
in Ref.\ \cite{Mueller09}, the moments are the density and the average {\em direction} of pseudomomentum 
(this will be also our choice).
In Refs.\ \cite{JungelZamponi13,Zamponi12,M2AS11} quantum corrections to semiclassical fluid equations
of various types (bipolar, spinorial, diffusive, hydrodynamic) are obtained, assuming Maxwell-Boltzmann statistics;
such corrections account for quantum pressure (Bohm potential) and for quantum interference between
positive-energy and negative-energy electrons.
Finally, fully-quantum hydrodynamic equations for pure states are obtained  in Refs.\ \cite{CAIM12,Bialynicki95}; such
equations are formally equivalent to the Schr\"odinger equation and represent, therefore,
a graphene equivalent of the Madelung system.
\par
\smallskip
The purpose of the present paper is to develop a semiclassical hydrodynamic description of an isothermal gas
of electrons and holes in graphene, assuming Fermi-Dirac statistics but without the linear-response approximation.
To do so, we shall proceed ``from first principles'', the fundamental assumptions in our derivation being only the 
Hamiltonian \eqref{H}, and the maximum entropy principle (MEP), which will be used to close the system of equation
for the moments (see Sect.\ \ref{Sec_MomentClosure}).
We remark that the MEP is a valuable conceptual tool that can be invoked whenever an 
``information gap'' has to be filled; it has proven to be useful in a great variety of situations \cite{Wu97} and has been 
successfully applied to semiconductor modeling \cite{Camiola13,LaRosa09}. 
\par
The derivation of the hydrodynamic equations in their general form is presented in Section \ref{Sec2}.
The starting point is a kinetic description, which is represented by semiclassical Wigner equations \eqref{WE} 
endowed with a BGK term describing the relaxation of the system to a local-equilibrium state.
In Sect.\ \ref{Sec_MomentClosure}, by taking suitable moments of the Wigner equations, we obtain a system
of equations for the electron/hole densities and for the electron/hole direction fields.
Then, the application of MEP allows to select a local-equilibrium state which provides a closure of such system
and yields equations for electrons that are decoupled from (and, apart from the charge sign, identical to)
the equations for holes.
The model obtained in this way is summarized in Sect.\ \ref{Sec_Hyp}: it consists of isothermal 
Euler-like equations  for the density $n$ and the direction field $\uu$, Eq.\ \eqref{S1}, 
together with an implicit constitutive relation for the  higher-order moments $P_{ij}$ and $Q_{ij}$. 
Such relation is given by the fact that $P_{ij}$ and $Q_{ij}$ are computed on the MEP equilibrium state 
$f_\eq$ which depends implicitly on $n$ and $\uu$ through the constraint that  $f_\eq$ must have the moments  
$n$ and $\uu$ (Eq.\ \eqref{S4}).
In Theorem \ref{theo_entropy} we prove that system \eqref{S1} admits a strictly convex entropy 
(physically, the free-energy)  and is therefore hyperbolic. 
\par
Section \ref{Sec3} is devoted to the study of the constraint equations \eqref{S4}.
Since $f_\eq$ is parametrized by three Lagrange multipliers, $A$ and $\BB = (B_1,B_2)$, solving the constraint
equations is equivalent to inverting the map $(A,\BB) \to (n,\uu)$. 
In Sect.\ \ref{Sec_solvability}, by introducing a family of functions $\cI_N^s(A,B)$ 
(see definition \eqref{Idef}), the problem is reduced to the inversion of the map $(A,B) \to (n,\abs{\uu})$. where
$B = \abs{\BB}$.
In Theorem \ref{solvability} we prove that, indeed, such map is a global diffeomorphism.
Then, in Sect.\ \ref{secPQ}, the functions $\cI_N^s$ are used to obtain an  expression of the 
moments  $P_{ij}$ and $Q_{ij}$ as functions of $n$, $\uu$ and of the scalar Lagrange multipliers $A$ and $B$.
Finally, a series expansion of $\cI_N^s(A,B)$ is computed in Sect.\ \ref{Sec_series}, which ends the general part.
\par
\smallskip
Although an explicit expression of $P_{ij}$ and $Q_{ij}$ as functions of $n$ and $\uu$ cannot be obtained in general, 
nevertheless it can be obtained in some particular case of physical relevance, which are dealt with in Sect.\ \ref{Sec_asymp}. 
In Sect.\ \ref{Sec_diffu} we study the diffusive limit, corresponding to $\abs{\uu} \to 0$ (or, equivalently, to $B \to 0$). 
At leading order we obtain a non-standard drift-diffusion equation, Eq.\ \eqref{DEgeneral}, where the form 
of diffusion and mobility coefficients is somehow inverted with respect to the drift-diffusion equations for fermions 
with standard (parabolic) dispersion relation.   
At first-order in $B$ (which is equivalent to a linear-response approximation) we obtain the corresponding 
wave equation, Eq.\ \eqref{Wave}.
\par
In Sect.\ \ref{Sec_MB} we study the Maxwell-Boltzmann limit (i.e.\ the asymptotics for high temperature, $T \to \infty$). 
In this regime the mathematical structure simplifies significantly because the function $\cI_N^s(A,B)$ 
takes a factorized form. 
This implies that the tensors $P_{ij}$ and $Q_{ij}$ can be expressed in terms of $n$ and $\uu$ by means 
of a single scalar function $X(\abs{\uu})$ (see Eq.\ \eqref{PQMB}). 
In the Maxwell-Boltzmann regime, not only the diffusive limit, Eq.\ \eqref{DEMB}, but also the opposite
``collimation'' limit $\abs{\uu} \to 1$ can be considered, which corresponds to all  particles having  (locally) 
the same direction. 
As illustrated in Remark \ref{GeomRem}, the hydrodynamic equations for the collimation regime, Eq.\ \eqref{MBPE}, 
show the  properties  of a geometrical-optics system.
\par
Finally, in Sect.\ \ref{Sec_DG}, we obtain the asymptotic form of system \eqref{S1} for the so-called 
degenerate Fermi gas, corresponding to the limit $T\to 0$.
In this case, the tensors  $P_{ij}$ and $Q_{ij}$ can be expressed in terms of three scalar functions $Y(\abs{\uu})$,
$Z(\abs{\uu})$ and $Z_\perp(\abs{\uu})$, whose asymptotic behaviour for $\abs{\uu} \to 0$ and $\abs{\uu} \to 1$ 
is analyzed in Theorem \ref{Theo_YZ}. 
Also in this case  the diffusive limit, Eq.\ \eqref{DEDG}, as well as the collimation limit, can be considered, the latter leading
to equations where force terms completely disappear.
\section{Derivation of the hydrodynamic equations}
\label{Sec2}
\subsection{Kinetic equations}
\label{SecKE}
The starting point of our derivation is the kinetic description of a statistical electronic state
in terms of the Wigner matrix
\BE
\label{WM}
   W = w_0\sigma_0 + \ww \cdot \ssg,
\EE
here decomposed in its four, real, Pauli components $w_k = w_k(\xx,\pp,t)$, $k = 0,1,2,3$.
It follows from general considerations \cite{JSP10,ChapterKP} that the vector $\ww = (w_1,w_2,w_3)$ 
can be semiclassically interpreted as the pseudospin density.
\par
Let 
\BE
\label{nudef}
  \nnu(\pp) = \frac{\pp}{\abs{\pp}} = (\nu_1(\pp),\nu_2(\pp),0)
\EE
be the unit vector corresponding to the direction of pseudomomentum. 
Then, the two Wigner functions
\BE
   w_\pm = \frac{1}{2}\left( w_0 \pm \nnu\cdot\ww\right)
\EE 
can be (semiclassically) interpreted \cite{ChapterKP,M2AS11} as the phase-space densities of electrons with, respectively, 
positive and negative energies, i.e.\ belonging to the upper and the lower of cones \eqref{bands}.
By introducing the orthogonal decomposition of $\ww$ with respect to $\nnu$,
\BE
\label{deco}
  \ww = w_s \nnu + \ww_\perp,
\EE
we can clearly write
\BE
  w_\pm = w_0 \pm w_s
\EE
and we remark that the Wigner matrix can be equally well described either by $(w_0, \ww)$ or by
$(w_0,w_s,\ww_\perp)$ or by  $(w_+, w_-,\ww_\perp)$.
The perpendicular part $\ww_\perp$ is responsible for the quantum interference between 
the positive-energy and negative-energy states \cite{JungelZamponi13,Morandi09,MorandiSchurrer11,Zamponi12,M2AS11} 
and so, as we shall see next, it will give no contribution in the semiclassical limit.
\par
The Wigner matrix \eqref{WM} is assumed to satisfy the semiclassical Wigner equation
\cite{M2AS11}
\BE
\label{WE}
\left\{
\begin{aligned}
&\pt_t w_0 + c\nabla_{\xx}\cdot \ww + \FF\cdot\nabla_{\pp} w_0 = \frac{1}{\tau}\left(w_0^\eq - w_0\right),
\\
& \pt_t \ww + c\nabla_{\xx} w_0 + \FF\cdot\nabla_{\pp} \ww - \frac{2c}{\hbar} \pp \times \ww 
  = \frac{1}{\tau}\left(\ww^\eq - \ww\right),
\end{aligned}
\right.
\EE
where $\FF = -\nabla_{\xx} V$ denotes the external force.
Note that the non-interacting (single-particle) Hamiltonian part of the equations has been supplemented with a simple
collisional mechanism of BGK type \cite{BGK54} that makes the system relax, in a typical time $\tau$, to a 
local-equilibrium state corresponding to the Wigner matrix $W^\eq = w_0^\eq\sigma_0 +  \ww^\eq\cdot\ssg$.
\par
Indeed, when kinetic equations are used in order to derive asymptotic fluid equations (which is our purpose), 
only very general properties of the collisional operator come into play (positivity, collisional invariants, entropy dissipation)
that are guaranteed by the BGK operator, provided that the local-equilibrium state $W^\eq$ is properly  chosen \cite{Levermore96};
a physically reasonable way to make this choice, which also ensures good mathematical properties, is to resort to
the maximum entropy principle. 
The discussion of this point is postponed to next section.
\par
Equations for the Wigner functions $w_+$ and $w_-$ can be readily deduced from  \eqref{WE}
and read as follows:
\BE
\label{WE2}
\left( \pt_t  \pm c\nnu\cdot\nabla_{\xx} + \FF\cdot\nabla_{\pp}\right) w_\pm + c\nabla_{\xx}\cdot\ww_\perp
  \pm \nnu\cdot( \FF\cdot\nabla_{\pp} )\ww_\perp = \frac{1}{\tau}\left(w_\pm^\eq - w_\pm\right).
\EE
Note that $w_+^\eq$ and $w_-^\eq$ represent the local-equilibrium distributions of, respectively, positive-energy and 
negative-energy states.
However, as we shall see in the next subsection, if $W^\eq$ is an entropy maximizer, 
then $w_-^\eq$ cannot have finite moments, which is clearly due tho the unboundedness from below of the 
Hamiltonian \eqref{H}.
As usual, this fact suggest that negative-energy states should be described in terms of electron vacancies, 
i.e.\ {\em holes}. 
\par
In our framework, holes can be formally introduced by  considering a transformation of the Wigner matrix $W$ to a new 
Wigner matrix $F = f_0\sigma_0 + \ff\cdot\ssg$ such that (omitting all variables but $\pp$)
\BE
\label{transf}
  f_+(\pp) = w_+(\pp), \qquad f_-(\pp) = 1 - w_- (-\pp).
\EE
Formally, $f_-$ represents the Wigner function of holes in the lower band. 
The transformation from $W$ to $F$ is not unique (since the required property does not involve the perpendicular part)
and, for our purposes, can be simple completed by setting, e.g.,  $\ff_\perp(\pp) = \ww_\perp(\pp)$.
\par
The equations for $f_+$ and $f_-$, then, read as follows:
\BE
\label{WE3}
\left( \pt_t  + c\nnu\cdot\nabla_{\xx} \pm \FF\cdot\nabla_{\pp}\right) f_\pm + c\nabla_{\xx}\cdot\ff_\perp
  \pm \nnu\cdot( \FF\cdot\nabla_{\pp} )\ff_\perp = \frac{1}{\tau}\left(f_\pm^\eq - f_\pm\right),
\EE
where now $f_+^\eq(\pp) = w_+^\eq(\pp)$ and $f_-^\eq(\pp) = 1 - w_-^\eq(-\pp)$
are the local-equilibrium distributions of, respectively, electrons in the upper cone and holes in the lower cone.
As we shall see next, by assuming Fermi-Dirac statistics, both $f_+^\eq$ and $f_-^\eq$ have finite moments.
\subsection{Maximum entropy closure}
\label{Sec_MomentClosure}
In this paper we are concerned with a fluid description of the two populations of carriers (electrons and holes) 
based on the densities
\BE
\label{ndef}
  n_\pm = \bk{f_\pm}, 
\EE
and on the average directions of the pseudomomentum (see Eq.\ \eqref{nudef})
\BE
\label{udef}
  \uu_\pm = \frac{\bk{\nnu \,f_\pm}}{n_\pm} = (u^\pm_1,u^\pm_2,0),
\EE
so that $c\uu_\pm$ are the average velocities.
Here, $\bk{f}$ denotes the following normalized integral\footnote{%
We are working with dimensionless Wigner functions and the constant $1/(2\pi\hbar)^2$
is necessary in order to compute physical  moments \cite{ChapterKP}.
}
of any scalar of vector-valued function of $\pp \in \mR^2$:
\BE
\label{bkdef}
  \bk{f}  = \frac{1}{(2\pi\hbar)^2}\int_{\mR^2} f(\pp)\,d\pp.
\EE
It is worth remarking that the inequality
\BE
\label{uineq}
  \abs{\uu_\pm} \leq 1
\EE
holds, as it can be immediately deduced by applying Jensen inequality to Eq.\ \eqref{udef}.
\par
Before going on, we have to come back to the kinetic level and specify the form of the local-equilibrium Wigner matrix $W^\eq$.
As anticipated in the previous section, $W^\eq$ will be chosen according to the maximum entropy principle (MEP), which
stipulates that $W^\eq$ is the most probable microscopic state (i.e.\ an entropy maximizer)  compatible with the
observed macroscopic moments.
\par
In our case, the observed moments are the densities \eqref{ndef} and the average pseudomementum directions
\eqref{udef} of electrons and holes.
If, moreover, we assume that our electron system is in thermal equilibrium (e.g.\ with a phonon bath) at fixed temperature $T>0$,
then the appropriate entropy functional is the total free-energy
\BE
\label{Edef}
  \cE(W) = \int_{\mR^4} \left[ k_B T s(W) + HW \right] d\pp\,d\xx
\EE
(which has to be minimized), where $H$ is the graphene Hamiltonian \eqref{H}, $k_B$ is the Botzmann constant and 
\BE
 \label{entropy} 
  s(W) =  W\log W + (1 - W)\log(1- W)
\EE
is (minus) the Fermi-Dirac entropy function.
Therefore, according to the MEP, we shall assume that the local-equilibrium Wigner matrix $W^\eq$ minimizes $\cE$
and is subject to the constraint of assigned moments
\BE
\label{constr}
  \bk{f_\pm^\eq} = n_\pm = \bk{f_\pm},
  \qquad
  \bk{\nnu f_\pm^\eq} = n_\pm\uu_\pm = \bk{\nnu f_\pm}
\EE
(clearly, the constraints are more naturally expressed in terms of $f_+^\eq$ and $f_-^\eq$ than
in terms of $w_+^\eq$ and $w_-^\eq$).
We remark that these constraints imply that $1$ and $\nnu$ are collisional invariants for our BGK  operator
(i.e.\ $n_\pm$ and $\uu_\pm$ are conserved by collisions).
\par
The form of $W^\eq$ can be given in a partially explicitly way.
In fact, it is not difficult to show (see e.g.\ Ref.\ \cite{JSP10}) that six Lagrange multipliers, labeled as $A_\pm$ and 
$\BB_\pm = (B_1^\pm,B_2^\pm,0)$, exist such that
\BE
\label{Gform}
 w_\pm^\eq = \frac{1}{\e^{\pm\frac{c}{k_BT} \abs{\pp} - \nnu\cdot\BB_\pm  \mp A_\pm}+1},
 \qquad
 \ww_\perp^\eq = \zzero
\EE
(the choice of the signs of $A_\pm$ and $\BB_\pm$ has been made for later convenience). 
From \eqref{Gform}, \eqref{transf} and \eqref{nudef} we immediately deduce that
\BE
\label{phipm}
  f_\pm^\eq = \frac{1}{\e^{\frac{c}{k_BT} \abs{\pp} - \nnu\cdot\BB_\pm  - A_\pm}+1},
  \qquad
  \ff_\perp^\eq = \zzero.
\EE
Hence, the semiclassical local equilibrium obtained from the MEP is given by a two independent Fermi-Dirac
 distributions, one for electrons  and one for holes, parametrized by suitable Lagrange multipliers.
The Lagrange multipliers are the  necessary degrees of freedom allowing the constraints \eqref{constr} to be fulfilled.
If the equations \eqref{constr} are solved in for $A_\pm$ and $\BB_\pm$ in terms of $n_\pm$ and $\uu_\pm$, then the 
equilibrium distributions $f_\eq^\pm$ can be taught as being parametrized by the moments $n_\pm$ and $\uu_\pm$ 
(and the temperature $T$).
This issue will will be considered in details in Sections \ref{Sec3} and \ref{Sec_asymp}.
\par
\smallskip
Let now assume that the time-scale over which the system is observed is very large compared to $\tau$.
Then, we can assume that the system is found in the local-equilibrium state
(this fact could be more rigorously justified by means of the Hilbert expansion method \cite{Cercignani88}).
Then, we rewrite Eq.\ \eqref{WE3} with $f_\pm = f_\pm^\eq$ and $\ff_\perp = \ff_\perp^\eq = \zzero$,
which yields
\BE
\label{LVL}
\left( \pt_t  + c\nnu\cdot\nabla_{\xx} \pm \FF\cdot\nabla_{\pp}\right) f_\pm^\eq  = 0.
\EE
Closed equations for $n_\pm$ and $\uu_\pm$ are simply obtained by taking the moments of Eq.\ \eqref{LVL}
and using the constraints \eqref{constr}:
\BE
\label{ME}
\left\{
\begin{aligned}
&\pt_t n_\pm  + c\pt_i (n_\pm u^\pm_i) = 0,
\\[3pt]
&\pt_t (n_\pm u^\pm_i) + c\pt_j P_{ij}^\pm  = \pm F_j Q_{ij}^\pm,
\end{aligned}
\right.
\EE
where we shortened the notations by putting $\pt_i \equiv \pt/\pt x_i$ and adopting the convention of summation over 
repeated indices.
The tensors $P_{ij}^\pm$ and $Q_{ij}^\pm$ are given by
\BE
\label{PTdef}
  P_{ij}^\pm =  \bk{\nu_i\nu_j f_\pm^\eq}
  \quad \text{and} \quad
  Q_{ij}^\pm =  \bk{ \frac{\pt \nu_i}{\pt p_j} f_\pm^\eq} = \bk{ \frac{1}{\abs{\pp}}\nu_i^\perp\nu_j^\perp f_\pm^\eq},
\EE
where\footnote{%
Although we have used the same notation for $\nnu_\perp$ and $\ww_\perp$, the former denotes a rotated unit vector, 
the latter an orthogonal projection.}
\BE
  \nnu_\perp = (-\nu_2,\nu_1,0).
\EE
The balance equations \eqref{ME} are formally closed if the distributions $f_\pm^\eq$ are thought 
as being parametrized by $n_\pm$ and $\uu_\pm$, as discussed above.
We remark, moreover,  that the equations for electrons and holes are completely decoupled. 
In fact, both in system \eqref{ME} and in the constraints  \eqref{constr}, 
the equations for the $+$ quantities do not depend on the $-$ quantities, and vice versa.
This was expected because, as already remarked, the quantum interference terms, that are the source of coupling 
in the quantum fluid descriptions models \cite{JungelZamponi13,Zamponi12,M2AS11}, disappear in the present 
semiclassical picture.
\begin{remark}
\rm
A coupling mechanism between electrons and holes can be very naturally 
introduced by considering the potential $V$ to have an
internal (mean-field) part $V_\mathrm{int}$ subject to the Poisson-like equation
\BE
\label{Poisson}
  \gamma\,(-\Delta)^{1/2} V_\mathrm{int} = n_+ - n_-,
\EE 
where $\gamma$ is a positive physical constant and the fractional Laplacian is suited to describe 
charges concentrated in a plane \cite{ElHajjMehats13}.
The signs in Eq.\ \eqref{Poisson} 
are determined by the fact that $V$ is the electron energy and $n_\pm$ are numerical densities.
\end{remark}
\subsection{Free-energy balance and hyperbolicity}
\label{Sec_Hyp}
Since  equations \eqref{constr} and \eqref{ME} are formally identical for electrons and holes 
(with the only exception of a sign in the force term), then in the subsequent discussions we need not to 
distinguish  the two populations any more and, therefore, we shall drop the $\pm$ labels everywhere.
Moreover, since 3-dimensional quantities were only important at the kinetic level, that we have now definitively
abandoned, we can henceforth consider any independent or dependent vector variable as 2-dimensional, e.g.
$$
  \pp = (p_1,p_2), \quad \xx = (x_1,x_2), \quad \nnu = (\nu_1,\nu_2), \quad \uu = (u_1,u_2)
  \quad \BB = (B_1,B_2).
$$ 
We therefore summarize the picture emerged so far as follows:
\BE
\label{S1}
\left\{
\begin{aligned}
&\pt_t n  + c\pt_i (n u_i) = 0,
\\[3pt]
&\pt_t (n u_i) + c\pt_j P_{ij}  = \pm F_jQ_{ij},
\end{aligned}
\right.
\EE
where
\BE
\label{S2}
  P_{ij} =  \bk{\nu_i\nu_j f_\eq},
  \qquad 
  Q_{ij} =  \bk{ \frac{1}{\abs{\pp}}\nu_i^\perp\nu_j^\perp f_\eq},
\EE
and where
\BE
\label{S3}
  f_\eq = \frac{1}{\e^{\frac{c}{k_BT} \abs{\pp} - \nnu\cdot\BB  - A}+1}
\EE
is subject to the constraints
\BE
\label{S4}
  \bk{f_\eq} = n,
  \qquad
  \bk{\nnu f_\eq} = n\uu.
\EE
\par
System \eqref{S1}--\eqref{S4} has some general features which are shared by other models obtained
from entropy minimization \cite{Levermore96}, the most significant being the existence of a local entropy
and the consequent hyperbolic character of the system.
\begin{theorem}
\label{theo_entropy}
Let  $s$ be the entropy function given by \eqref{entropy} and 
\BE
  \veps = k_BT s(f_\eq) + (c\abs{\pp} \pm V)f_\eq
\EE
be the microscopic free-energy associated to the local-equilibrium state $f_\eq$
(as usual, $+$ refers to electrons and $-$ to holes).
Then, the local free-energy $\bk{\veps}$ is a strictly convex entropy for system \eqref{S1},
which is therefore hyperbolic.
Moreover, $\bk{\veps}$ satisfies the balance law
\BE
\label{entropylaw}
  \pt_t\bk{\veps} + c\pt_i \bk{\nu_i \veps} = \pm n\, \pt_tV. 
\EE
\end{theorem}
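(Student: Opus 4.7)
The plan is to proceed in two steps: first derive the balance law \eqref{entropylaw} directly from the kinetic equation \eqref{LVL} satisfied by $f_\eq$, and then invoke a standard Levermore/Godunov--Mock machinery to pass from strict convexity of $\bk{\veps}$ in the conserved moments to symmetrizability, hence hyperbolicity, of system \eqref{S1}. The key observation is that, because $f_\eq$ is an MEP minimizer, the derivative $\pt\veps/\pt f$ evaluated on $f_\eq$ has a form that meshes precisely with the drift in \eqref{LVL}. Concretely, I would first read off from \eqref{S3} the Euler--Lagrange identity $s'(f_\eq)=\log\bigl(f_\eq/(1-f_\eq)\bigr)=A+\nnu\cdot\BB-\tfrac{c}{k_BT}\abs{\pp}$, so that, writing $\veps=g(f_\eq,\pp,\xx,t)$ with $g(f,\pp,\xx,t)=k_BT\,s(f)+(c\abs{\pp}\pm V(\xx,t))f$, one gets
$$\frac{\pt g}{\pt f}\bigg|_{f_\eq}=k_BT\,s'(f_\eq)+c\abs{\pp}\pm V=k_BT(A+\nnu\cdot\BB)\pm V.$$
Applying the transport operator $\pt_t+c\nnu\cdot\nabla_\xx\pm\FF\cdot\nabla_\pp$ to $\veps$ by the chain rule, the $\pt g/\pt f$ factor multiplies exactly the combination annihilated by \eqref{LVL}. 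The explicit $\pp$-derivative of $g$ (through $c\abs{\pp}$) produces $\pm c f_\eq\,\FF\cdot\nnu$, while the explicit $\xx$-derivative of $g$ produces $\pm c f_\eq\,\nnu\cdot\nabla_\xx V$, and these two cancel because $\FF=-\nabla_\xx V$. Only the time derivative of the potential survives, namely $(\pt_t+c\nnu\cdot\nabla_\xx\pm\FF\cdot\nabla_\pp)\veps=\pm f_\eq\,\pt_tV$. Averaging in $\pp$, using $\bk{f_\eq}=n$, and noting that $\bk{\nabla_\pp\veps}=0$ by integration by parts (Fermi--Dirac tails decay exponentially), one recovers \eqref{entropylaw}.

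For convexity and hyperbolicity, I would decompose $\bk{\veps}=\Sigma(n,n\uu)\pm Vn$ with $\Sigma(n,n\uu):=\bk{k_BT\,s(f_\eq)+c\abs{\pp}f_\eq}$, so the potential contribution is linear in the conserved variables $U=(n,nu_1,nu_2)$ and cannot spoil convexity. Strict convexity of $\Sigma$ in $U$ is the standard MEP fact: for distinct $U_1,U_2$, the equilibria $f_1,f_2$ must differ on a positive-measure set, and because the integrand is strictly convex in $f$ (since $s''(f)=1/(f(1-f))>0$), evaluating $\Sigma$ at $\lambda U_1+(1-\lambda)U_2$ against the admissible but non-minimizing trial $\lambda f_1+(1-\lambda)f_2$ gives the strict Jensen inequality $\Sigma(\lambda U_1+(1-\lambda)U_2)<\lambda\Sigma(U_1)+(1-\lambda)\Sigma(U_2)$. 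Once \eqref{entropylaw} is read as an entropy equality for \eqref{S1} with strictly convex entropy density $\bk{\veps}$, flux $c\bk{\nu_i\veps}$, and lower-order source $\pm n\pt_tV$, the Godunov--Mock (Friedrichs--Lax) theorem provides a symmetrizer of the Jacobian of the conservative part of \eqref{S1}, which is therefore symmetric hyperbolic.

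The only delicate point is ensuring that $\Sigma$ is a well-defined single-valued function of $U$: one needs the constraint map $(A,\BB)\mapsto(n,n\uu)$ from \eqref{S4} to be a bijection, and one needs distinct $U$'s to produce distinct minimizers $f_\eq$. The former is precisely the content of Theorem \ref{solvability}, proved later in the paper, and the latter is immediate from the constraints themselves (if $f_1=f_2$ a.e.\ then $\bk{f_1}=\bk{f_2}$ and $\bk{\nnu f_1}=\bk{\nnu f_2}$, forcing $U_1=U_2$). Once these are granted, the convexity argument and the passage to hyperbolicity are routine.
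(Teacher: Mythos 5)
Your derivation of the balance law \eqref{entropylaw} is essentially the paper's: both exploit the Euler--Lagrange identity $k_BT\,s'(f_\eq)+c\abs{\pp}\pm V=k_BT(A+\nnu\cdot\BB)\pm V$ so that the transport operator of \eqref{LVL} kills the $\pt g/\pt f$ contribution, the explicit $\xx$- and $\pp$-derivatives of the potential and of $c\abs{\pp}$ cancel, and only $\pm f_\eq\,\pt_tV$ survives before averaging. Where you genuinely diverge is the convexity/hyperbolicity part. The paper verifies the entropy--flux compatibility directly as an algebraic identity, namely
\begin{equation*}
\frac{\pt}{\pt\rrho}\bk{\nu_i\veps}=\Big(\frac{\pt\bk{\veps}}{\pt\rrho}\Big)^{\!T}\frac{\pt\JJ_i}{\pt\rrho},
\qquad \frac{\pt\bk{\veps}}{\pt\rrho}=\aalpha,
\end{equation*}
and then proves convexity by exhibiting $\bk{\veps}$ as the Legendre transform of $\bk{\veps}^*(\aalpha)$, whose Hessian is the explicitly positive-definite matrix $\bk{\mm\mm^T\,\e^{E-M}/(\e^{E-M}+1)^2}$. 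You instead use the variational characterization of the MEP minimizer and a strict Jensen inequality on the trial function $\lambda f_1+(1-\lambda)f_2$. Your route is more ``structural'' (it is the classical Levermore argument and generalizes beyond this model), while the paper's duality computation is more quantitative; both correctly reduce the well-definedness of $\Sigma$ to Theorem \ref{solvability}.

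There are, however, two points where your argument as written falls short of what hyperbolicity actually requires. First, the Godunov--Mock symmetrization needs the Hessian $\pt^2\bk{\veps}/\pt\rrho^2$ to be \emph{positive definite}, and strict convexity in the ``no flat segments'' sense obtained from Jensen does not imply this (think of $x^4$ at the origin). The fix is available to you: since $\pt\bk{\veps}/\pt\rrho=\aalpha$ (your Euler--Lagrange identity) and the map $\aalpha\mapsto\rrho$ has Jacobian equal to a covariance-type matrix that is manifestly positive definite, the Hessian of $\bk{\veps}$ is its inverse --- but this is precisely the paper's duality computation, which you should not skip. Second, you infer the entropy--flux compatibility from the balance law, which you derived from the kinetic equation \eqref{LVL}; an arbitrary smooth solution of the closed moment system \eqref{S1} need not lift to a solution of \eqref{LVL}, so strictly speaking the identity $\nabla_\rrho(c\bk{\nu_i\veps})=\nabla_\rrho\bk{\veps}\cdot\nabla_\rrho(c\JJ_i)$ should be checked pointwise in state space (as the paper does in \eqref{HY1}), or else you must argue that arbitrary values of $(\rrho,\pt_i\rrho)$ at a point are attained by kinetic solutions. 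Both gaps are patchable, but neither is closed by what you wrote.
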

\begin{proof}
For the sake of simplicity, we put $c = 1$ and $k_BT = 1$ throughout this proof.
Moreover, let us introduce more concise notations by defining the vectors
$$
  \mm = \begin{pmatrix} 1  \\  \nu_1 \\ \nu_2  \end{pmatrix},
  \quad
  \rrho = \bk{ \mm f_\eq} =  \begin{pmatrix} n \\  nu_1 \\ nu_2  \end{pmatrix},
  \qquad
  \aalpha = \begin{pmatrix} A \pm V \\  B_1 \\ B_2  \end{pmatrix},
 $$
 and the functions
 $$
  E = \abs{\pp} \pm V,
  \qquad
  M = \aalpha^T \mm,
$$
so that \eqref{S3} can be rewritten as 
\BE
\label{Faux}
  f_\eq =  \frac{1}{\e^{E - M} +1} = (s')^{-1} \left(-E + M\right),
\EE
and the moment system \eqref{S1} can be written in the form
\BE
\label{Saux}
   \pt_t \rrho + \pt_i \JJ_i = \qq,
\EE
where
$$
  \JJ_i = \bk{\nu_i \mm f_\eq}.
  \qquad
  \qq = \pm F_j \bk{\frac{\pt \mm}{\pt p_j} f_\eq}.
$$
If $\xi$ is a generic variable of $f_\eq$, from \eqref{Faux} we obtain the identity
\BE
\label{aux}
  \frac{\pt \veps}{\pt \xi} = \left( s'(f_\eq) + E \right)\frac{\pt f_\eq}{\pt \xi} + \frac{\pt E}{\pt \xi} f_\eq 
  = M \frac{\pt f_\eq}{\pt \xi} + \frac{\pt E}{\pt \xi} f_\eq,
\EE
which, for $\xi = \rrho$, gives
\BE
\label{LT1}
  \frac{\pt \bk{\veps}}{\pt \rrho}  = \aalpha^T \frac{\pt}{\pt \rrho}\bk{\mm f_\eq} = \aalpha.
\EE
By using \eqref{aux} and \eqref{LT1}, we also obtain
\BE
\label{HY1}
  \frac{\pt}{\pt \rrho} \bk{\nu_i \veps} = \bk{\nu_i \aalpha^T\mm \frac{\pt}{\pt \rrho} f_\eq}
  = \aalpha^T \frac{\pt}{\pt \rrho}\bk{\nu_i\mm f_\eq}
  = \left(\frac{\pt \bk{\veps}}{\pt \rrho}\right)^T \frac{\pt \JJ_i}{\pt \rrho} .
\EE
Relation \eqref{HY1} implies that \eqref{Saux} is an entropy for the system,
provided that $\bk{\veps}$ is a convex function of $\rrho$ (see e.g.\ Ref.\ \cite{Chen05}).
The simplest way to prove the convexity of $\bk{\veps}$ is observing that it is
the Legendre transform of the convex function 
$$
  \bk{\veps}^*(\aalpha) 
  = \aalpha^T\rrho- \bk{\veps}(\rrho)
  = \bk{\aalpha^T\mm f_\eq} - \bk{\veps}(\rrho)
  = -\bk{s(f_\eq) + (E-M)f_\eq}
$$
(note that $\aalpha$ and $\rrho$ are Legendre-conjugate variables, by Eq.\ \eqref{LT1}).
The convexity of $\bk{\veps}^*$ is evident upon writing
$$
  \frac{\pt\bk{\veps}^*}{\pt\aalpha}  = \bk{\mm f_\eq} = \rrho,
$$ 
so that the Hessian matrix of $\bk{\veps}^*$ is given by
\BE
\label{HY2}
  \frac{\pt}{\pt \aalpha}  \left(\frac{\pt  \bk{\veps}^*}{\pt \aalpha}\right)^T 
  = \bk{\mm  \left(\frac{\pt f_\eq}{\pt\aalpha}\right)^T}
  =  \bk{\mm  \mm^T \frac{\e^{E - M}}{(\e^{E - M}+1)^2} },
\EE
which is, clearly, a positive-definite matrix.
Then, $\bk{\veps}$ is a strictly convex entropy for system \eqref{Saux}, implying that the system is symmetrizable 
and, therefore, hyperbolic.
\par
In order to prove \eqref{entropylaw}, we resort again to Eq.\ \eqref{aux}  which, for $\xi = t$, yields
$$
  \pt_t \veps = M\pt_t f_\eq \pm \pt_t V f_\eq = M \left\{E, f_\eq\right\}  \pm \pt_t V  f_\eq,
$$
where we used the fact that $f_\eq$ satisfies Eq.\ \eqref{LVL}, which is a Liouville equation with Hamiltonian $E$
(and $\left\{\cdot, \cdot\right\}$ is the Poisson bracket).
Using again \eqref{aux}, it is not difficult to prove that
$$
   M \left\{E, f_\eq\right\} = \left\{E, \veps\right\}, 
$$
which shows that $\veps$ satisfies
$$
  \pt_t \veps +c\nnu\cdot \nabla_{\xx} \veps \mp \nabla_{\xx} V \cdot \nabla_{\pp} \veps = \pm \pt_t V  f_\eq.
$$
Integrating the last equation over $\pp \in \mR^2$ yields Eq.\ \eqref{entropylaw}.
\end{proof}
We remark that Eq.\ \eqref{entropylaw} implies that in an isolated region $\Omega \subset \mR^2$ 
(no free-energy flux through $\pt\Omega$) the total free-energy balance  is
\BE
   \frac{d}{dt} \int_\Omega \bk{\veps_++\veps_-} \,d\xx = \int_\Omega (n_+-n_-)\pt_tV \,d\xx.
\EE
\section{Study of the constraint equations}
\label{Sec3}
\subsection{Theorem of solvability}
\label{Sec_solvability}
Our goal is now writing in a more explicit way the moment equations \eqref{S1}, that is expressing
the moments $P_{ij}$ and $Q_{ij}$ as functions of $n$ and $\uu$.
In fact, $P_{ij}$ and $Q_{ij}$ are defined by \eqref{S2} and \eqref{S3} in terms of the Lagrange multipliers 
$A$ and $\BB$, which are related to $n$ and $\uu$ by the constraints \eqref{S4}.
\par
The first thing we have to do is computing the expressions of the moments $\bk{f_\eq}$ and $\bk{\nnu f_\eq}$ 
as functions of $A$ and $\BB$.  
By using the polar coordinates
$p_1 = \abs{\pp}\cos\theta$, $p_2 = \abs{\pp}\cos\theta$, and  defining
\BE
\label{nT}
  n_T = \frac{(k_BT)^2}{2\pi \hbar^2 c^2},
\EE
we obtain
\BE
\label{inte}
\begin{aligned}
 &\bk{f_\eq} = \frac{n_T}{\pi} \int_0^{\pi} \phi_2(A+B\cos\theta)\,d\theta 
 \\[2pt]
 &\bk{\nu_1f_\eq} = \bk{\cos\theta f_\eq} = \frac{n_T\cos\theta_B}{\pi}  \int_0^{\pi} 
 \cos\theta\,\phi_2(A+B\cos\theta)\,d\theta, 
 \\[2pt]
 &\bk{\nu_2f_\eq} = \bk{\sin\theta f_\eq} = \frac{n_T\sin\theta_B}{\pi}  \int_0^{\pi} 
 \cos\theta\,\phi_2(A+B\cos\theta)\,d\theta,
\end{aligned}
\EE
where 
\BE
\label{Bdef}
\BB = (B\cos\theta_B,B\sin\theta_B)
\EE
and 
\BE
\label{phidef}
  \phi_s(x) = \frac{1}{\Gamma(s)} \int_0^\infty \frac{t^{s-1}}{e^{t-x} + 1}\,dt
\EE
is the so-called {\em Fermi integral} of order $s>0$.
It is therefore natural to introduce, for $s>0$ and $N$ integer, the following functions:
\BE
\label{Idef}
 \cI_N^s(A,B) = \frac{1}{\pi}  \int_0^{\pi}  \cos(N\theta)\,\phi_s(A+B\cos\theta)\,d\theta,
 \qquad
 A \in \mR,\ B\geq 0.
\EE
\par
Using \eqref{inte} and \eqref{Idef}, the constraint equations \eqref{S4} can be rewritten as
\BE
\label{constr2}
 n_T \cI_0^2 = n, 
 \qquad
 n_T \cos\theta_B \cI_1^2 = nu_1,
 \qquad
 n_T \sin\theta_B \cI_1^2 = nu_2,
\EE
from which it immediately follows that
\BE
\label{thetaB}
  \cos\theta_B = \frac{u_1}{\abs{\uu}}, \qquad  \sin\theta_B = \frac{u_2}{\abs{\uu}}
\EE
(i.e.\ the direction of $\BB$ coincides with the direction of $\uu$), and that the scalar
functions $A$ and $B$ are related to $n$ and $\abs{\uu}$ by
\BE
\label{ABeq}
  \cI_0^2(A,B) = \frac{n}{n_T}, \qquad  \frac{\cI_1^2(A,B)}{ \cI_0^2(A,B)} = \abs{\uu}.
\EE
\begin{lemma}
\label{Asym}
The functions $\cI_N^s$ have the following asymptotic behavior as  $A^2+B^2 \to \infty$:
\BE
\label{BFasym}
\cI_N^s(A,B) \sim
\left\{
\begin{aligned}
   & \e^A\,I_N(B),&  &\text{if $A < -B$},
\\[4pt]
  &\frac{1}{\pi\Gamma(s+1)} \int_0^{C(A,B)} \hspace{-16pt} \cos(N\theta) (A+B\cos\theta)^s d\theta,
  & &\text{if $A > -B$},
\end{aligned}
\right.
\EE
where $I_N$ are the modified Bessel functions of the first kind and 
\BE
\label{Cdef}
  C(A,B)
   = \left\{
  \begin{aligned}
  &\arccos\left(-\textstyle{\frac{A}{B}}\right),& &\text{if $-B < A < B$},
  \\
  &\pi,& &\text{if $A \geq B$},
  \end{aligned}
  \right.
  \EE
(and $f \sim g$ means $f/g \to 1$). 
\end{lemma}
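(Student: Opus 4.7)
The plan is to reduce the asymptotics of $\cI_N^s(A,B)$ to those of the Fermi integral $\phi_s(x)$ at $x\to\pm\infty$, combined with a careful treatment of the transition region in $\theta$ where the argument $A+B\cos\theta$ is of order unity. I need the two expansions
\[
\phi_s(x)=\e^{x}\bigl(1+O(\e^{x})\bigr)\ \text{as}\ x\to-\infty,
\qquad
\phi_s(x)=\frac{x^s}{\Gamma(s+1)}\bigl(1+O(x^{-2})\bigr)\ \text{as}\ x\to+\infty.
\]
The first comes from the series $\phi_s(x)=-\Li_s(-\e^x)=\sum_{k\ge 1}(-1)^{k-1}\e^{kx}/k^s$, obtained by a geometric expansion of $(\e^{t-x}+1)^{-1}$ inside the definition \eqref{phidef}; the second is the leading Sommerfeld expansion.

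In the regime $A<-B$, the argument $A+B\cos\theta$ is bounded above by $A+B<0$ on $[0,\pi]$ and tends uniformly to $-\infty$ along any sequence with $A^2+B^2\to\infty$. Substituting the negative-argument asymptotic in \eqref{Idef} and invoking dominated convergence (the integrand being controlled by a constant multiple of $\e^{A+B\cos\theta}$) gives
\[
\cI_N^s(A,B)\sim\frac{1}{\pi}\int_0^\pi\cos(N\theta)\,\e^{A+B\cos\theta}\,d\theta = \e^A\,I_N(B),
\]
the last equality being the classical integral representation of the modified Bessel function of the first kind for integer $N$.

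In the regime $A>-B$, I split the integral at $\theta_*:=C(A,B)$. On $[\theta_*,\pi]$ (empty when $A\ge B$) the argument is nonpositive and bounded above by $A-B$; as $A-B\to -\infty$ this produces an exponentially small tail of order $\e^{A-B}$, negligible compared with the algebraic principal term. On the bulk sub-interval $[0,\theta_*-\delta]$, for a small $\delta=\delta(A,B)$ to be chosen, the positive-argument expansion of $\phi_s$ applied pointwise in $\theta$ reproduces the claimed leading integral, with a relative error $O((A+B\cos\theta)^{-2})$ that integrates to $o(1)$ as $A^2+B^2\to\infty$.

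The delicate point, and main obstacle, is the transition zone $|\theta-\theta_*|<\delta$, relevant when $-B<A<B$, where $A+B\cos\theta=O(1)$ and neither expansion is uniformly valid. I plan to handle it by the linearization $A+B\cos\theta\approx -B\sin\theta_*\,(\theta-\theta_*)$ near $\theta_*$, together with the choice $\delta\sim(A^2+B^2)^{-1/4}$: the contribution of the zone to $\cI_N^s$ is then $O(\delta)$ (since $\phi_s$ is bounded on bounded sets), and its contribution to the candidate principal integral is bounded by a constant times $(B\delta)^{s}\,\delta$. Both are negligible against the main term, which is of order $(A^2+B^2)^{s/2}$ whenever $A/B$ is bounded away from the limiting ratio $-1$. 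Combining the three regimes (bulk principal term, exponential tail, transition zone) yields the asserted equivalence, the boundary case $A\ge B$ requiring only the straightforward analogue near $\theta=\pi$ when $A$ approaches $B$.
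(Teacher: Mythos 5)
Your argument follows the same route as the paper's proof: both reduce the claim to the two-sided asymptotics \eqref{phias} of $\phi_s$ and substitute them into the integral \eqref{Idef}. The paper simply asserts the resulting equivalences, whereas you add an explicit error analysis and correctly single out the transition zone near $\theta_*=C(A,B)$ as the delicate point; this is a genuine improvement, but two of your intermediate size estimates are misstated (harmlessly). On $[\theta_*,\pi]$ the argument $A+B\cos\theta$ decreases from $0$ at $\theta_*$ to $A-B$ at $\pi$, so it is bounded above by $0$, not by $A-B$; the tail is therefore not $O(\e^{A-B})$ but is simply bounded by $(\pi-\theta_*)\,\phi_s(0)=O(1)$, which is still $o\big((A^2+B^2)^{s/2}\big)$, so the conclusion stands. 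Likewise, with $\delta\sim(A^2+B^2)^{-1/4}$ the argument inside the transition zone reaches size $\sim B\sin\theta_*\,\delta\sim(A^2+B^2)^{3/8}\cdot(A^2+B^2)^{0}$ (i.e.\ of order $(A^2+B^2)^{3/8}\,(A^2+B^2)^{1/8}=(A^2+B^2)^{3/8+1/8}$, in any case unbounded), so ``$\phi_s$ is bounded on bounded sets'' does not apply; the correct bound is $\phi_s=O\big((B\delta)^s\big)$ there, giving a zone contribution $O\big((B\delta)^s\delta\big)=o\big((A^2+B^2)^{s/2}\big)$ for every $s>0$, so again the estimate survives. Finally, your claim that $A+B\cos\theta\to-\infty$ uniformly whenever $A<-B$ is false for sequences with $A+B$ bounded (e.g.\ $A=-B-1$, $B\to+\infty$), and along such sequences the stated equivalence itself degenerates (near $\theta=0$ the argument is $O(1)$ on an interval whose contribution is of the same order as $\e^A I_N(B)$); the paper's proof shares exactly this blind spot, and both arguments implicitly require the sequence $(A,B)$ to stay away from the critical line $A=-B$, which is all that is actually used in the later sections.
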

\begin{proof}
It is well known \cite{Wood92} that
\BE
\label{phias}
\phi_s(x) \sim
\left\{
\begin{aligned}
   &\e^{x},&  &\text{as $x\to -\infty$},
\\
   &\frac{x^s}{\Gamma(s+1)},&  &\text{as $x\to +\infty$},
\end{aligned}
\right.
\EE
for every $s>0$.
When $A<-B$, the argument $A+B\cos\theta$ in \eqref{Idef} is negative for all $\theta \in [0,\pi]$ and, 
then, form \eqref{phias} we have that
$$
  \cI_N^s(A,B) \sim \frac{\e^A}{\pi} \int_0^\pi \cos(N\theta)\, \e^{B\cos\theta} d\theta = \e^A\,I_N(B)
$$
as $A^2+B^2 \to \infty$.
When $A^2+B^2 \to \infty$ with $A > -B$, according again to \eqref{phias}, 
$\phi_s(A+B\cos\theta)$ will be infinitesimal for $A+B\cos\theta < 0$ and asymptotic to 
$\frac{1}{\Gamma(s+1)}(A+B\cos\theta)^s$ for $A+B\cos\theta > 0$.
Then,
$$
   \cI_N^s(A,B) \sim \frac{1}{\pi\Gamma(s+1)} \int_{[0,\pi]\cap\{A + B\cos\theta > 0\}} \cos(N\theta) (A+B\cos\theta)^s  d\theta, 
$$
which yields the second case of Eq.\ \eqref{BFasym}.
\end{proof}
We now prove that system \eqref{ABeq} has a unique solution for $n>0$ and $\abs{\uu} \in [0,1)$.
\begin{theorem}
\label{solvability}
The map
\BE
\label{map}
(A,B)\in\mR\times[0,+\infty) \longmapsto \left(\cI_0^2, \frac{\cI_1^2}{\cI_0^2}\right) \in (0,+\infty)\times[0,1)
\EE
is a global diffeomorphism.
\end{theorem}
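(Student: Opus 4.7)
My strategy is to combine a local diffeomorphism argument with a global bijection argument organised along the level sets of $\cI_0^2$. First I will show that $F := (\cI_0^2,\cI_1^2/\cI_0^2)$ is $C^\infty$ with everywhere positive Jacobian determinant. Then I will exploit monotonicity of $\cI_1^2/\cI_0^2$ in $B$ along each level set of $\cI_0^2$ to obtain injectivity and surjectivity simultaneously, reducing the problem to a one-dimensional question. The boundary ray $B=0$ is trivial: there $F(A,0) = (\phi_2(A),0)$ and $\phi_2$ is a smooth bijection $\mathbb{R}\to(0,\infty)$, so I will focus on $B>0$.

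\textbf{Local diffeomorphism via Cauchy--Schwarz.} Differentiating \eqref{phidef} under the integral sign gives $\phi_s'=\phi_{s-1}$, hence $\pt_A\cI_N^s = \cI_N^{s-1}$. Using $2\cos\theta\cos(N\theta) = \cos((N-1)\theta)+\cos((N+1)\theta)$ I obtain $\pt_B\cI_0^s = \cI_1^{s-1}$ and $\pt_B\cI_1^s = \tfrac12(\cI_0^{s-1}+\cI_2^{s-1})$. The auxiliary map $(A,B)\mapsto(\cI_0^2,\cI_1^2)$ therefore has Jacobian determinant
\[
D \;:=\; \cI_0^1\cdot\tfrac12(\cI_0^1+\cI_2^1) - (\cI_1^1)^2 \;=\; \langle 1,1\rangle_w\,\langle\cos\theta,\cos\theta\rangle_w - \langle 1,\cos\theta\rangle_w^2,
\]
where $\langle f,g\rangle_w := \tfrac1\pi\int_0^\pi fg\,w\,d\theta$ and $w(\theta) := \phi_0(A+B\cos\theta) > 0$. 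Since $\cos\theta$ is not $w$-a.e.\ constant on $[0,\pi]$, Cauchy--Schwarz yields $D > 0$ strictly. Post-composing with $(x,y)\mapsto(x,y/x)$ multiplies the determinant by $1/\cI_0^2 > 0$, so $\det DF > 0$ everywhere and $F$ is a local $C^\infty$ diffeomorphism.

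\textbf{Level-set monotonicity.} Since $\pt_A\cI_0^2 = \cI_0^1 > 0$ and $\cI_0^2(\cdot,B)$ sweeps $(0,\infty)$ as $A$ ranges over $\mathbb{R}$ (from $\phi_s(x)\to 0$ and $\to\infty$ at $\mp\infty$), the implicit function theorem yields, for each $n>0$, a smooth graph $A=A(B)$ on $[0,\infty)$ with $\cI_0^2(A(B),B) = n/n_T$ and $A'(B) = -\cI_1^1/\cI_0^1$. Differentiating $\cI_1^2(A(B),B)$ along this graph gives $d\cI_1^2/dB = D/\cI_0^1 > 0$, so $\cI_1^2/\cI_0^2$ is strictly increasing in $B$ along each level set. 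At $B=0$ the integrand is odd about $\theta=\pi/2$, so the ratio is $0$; the pointwise bound $|\cos\theta|\leq 1$ together with positivity gives the uniform bound $\cI_1^2/\cI_0^2 < 1$. So injectivity is immediate, and to conclude surjectivity it remains only to show the ratio attains every value in $[0,1)$, i.e.\ that it tends to $1$ as $B\to\infty$ along the level set.

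\textbf{Main obstacle.} The delicate point is precisely this limit. I plan to use the second case of Lemma \ref{Asym}: the constraint $\cI_0^2 = n/n_T$ forces $A/B\to -1$ as $B\to\infty$ (otherwise the leading-order behaviour $\cI_0^2 \sim \frac{B^3}{12\pi}\int_0^{\arccos(-A/B)}(\tfrac{A}{B}+\cos\theta)^2 d\theta$ would blow up), so writing $A=-B(1-\eta)$ with $\eta\to 0$ one has $C(A,B) = \arccos(1-\eta)\to 0$. The effective support $\{A+B\cos\theta>0\}$ therefore shrinks to $\{\theta=0\}$ and the weight $\phi_2(A+B\cos\theta)$ concentrates there; rescaling $\theta = \sqrt{2\eta}\,\tau$ and using $\cos\theta \approx 1-\theta^2/2$ should make the concentration quantitative and force $\cI_1^2/\cI_0^2\to 1$. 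Combining the positive Jacobian with the resulting global bijection, I conclude that $F$ is a global diffeomorphism.
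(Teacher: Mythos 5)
Your proof is correct in outline but reaches the global statement by a genuinely different route than the paper. The local step is the same in both: the Jacobian of $(A,B)\mapsto(\cI_0^2,\cI_1^2)$ is a Gram/variance determinant for the weight $\phi_1(A+B\cos\theta)$ (note: your weight should be $\phi_1$, not $\phi_0$, since $\pt_A\cI_N^2=\cI_N^1$), and Cauchy--Schwarz gives strict positivity. For globality the paper invokes Hadamard's theorem and verifies \emph{properness}, using Lemma \ref{Asym} to describe the asymptotic geometry of the level lines of $n$ and $u$ near infinity and to show that a sequence trapped between two level lines of $n$ must cross level lines of $u$ with increasing values. You instead foliate the domain by the level sets $\{\cI_0^2=n/n_T\}$ (each a graph $A=A_n(B)$, since $A\mapsto\cI_0^2(A,B)$ is a strictly increasing bijection onto $(0,+\infty)$ for every fixed $B$) and show that $\cI_1^2/\cI_0^2$ is strictly increasing in $B$ along each leaf, with range $[0,1)$; this yields injectivity and surjectivity simultaneously and avoids Hadamard altogether. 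Your computation $\frac{d}{dB}\cI_1^2(A_n(B),B)=D/\cI_0^1>0$ is correct, and the reduction to a one-dimensional monotonicity statement is arguably cleaner and more self-contained than the properness check.

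The one place you stop short is the limit you yourself flag as the main obstacle: that $\cI_1^2/\cI_0^2\to 1$ as $B\to+\infty$ along a level set of $\cI_0^2$. Your plan works, and can in fact be closed more simply than by rescaling: first, the level line must eventually lie in the region $A>-B$, since in the region $A<-B$ Lemma \ref{Asym} gives $\cI_0^2\sim\e^A I_0(B)\sim \e^{A+B}/\sqrt{2\pi B}$, and holding this equal to $n/n_T$ would force $A+B\to+\infty$, contradicting $A+B<0$. Then, with $A>-B$ and $A/B\to-1$ (forced, as you say, by homogeneity of degree $2$ of the supercritical asymptotics), the effective support of the weight is $\{\theta: \cos\theta>-A/B\}$, on which $-A/B<\cos\theta\leq 1$; since the ratio is a weighted average of $\cos\theta$ over this set, it is squeezed between $-A/B$ and $1$ and tends to $1$ with no further computation. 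With that line added (and the $\phi_0\to\phi_1$ index fixed), your argument is complete.
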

\begin{proof}
Putting $n_T = 1$ and $\abs{\uu} = u$, we can adopt a handy notation and denote the image of the map 
by $(n,u)$, i.e.
$$
  n = \cI_0^2(A,B), \qquad u = \frac{\cI_1^2(A,B)}{\cI_0^2(A,B)}\,.
$$
A representation of the map $(A,B) \mapsto (n,u)$ is given in Figure \ref{figure1}.
We first note that the smoothness of the map follows from the fact that $\phi_s$ are analytic functions 
(for $s>0$).
Moreover, the properties
$$
  \cI_0^2(A,B) > 0, \qquad  0 \leq \cI_1^2(A,B) < \cI_0^2(A,B)
$$
can be easily verified from definition \eqref{Idef} and show that $(n,u)$ vary in $(0,+\infty)\times[0,1)$ 
as $(A,B)$ vary in $\mR\times[0,+\infty)$
(in particular, note that $\cI_1^2 = 0$, and then $u =0$, for $B=0$).
\par
Let us now prove local invertibility, which is clearly equivalent to the local invertibility of the map
$(A,B) \mapsto (\cI_0^2,\cI_1^2) =  (n,nu)$.
By using the fundamental identity
\BE
\label{deriphi}
  \frac{d}{dx}\phi_s(x) = \phi_{s-1}(x),
\EE
it is easily seen that the Jacobian matrix of this map is
$$
\begin{aligned}
&\pt_A \cI_0^2 = \cI_0^1 = \mathrm{E}[1],&\quad  &\pt_B \cI_0^2 = \cI_1^1 = \mathrm{E}[\cos\theta],
\\[4pt]
&\pt_A \cI_1^2 = \cI_1^1 = \mathrm{E}[\cos\theta],&\quad
&\pt_B \cI_1^2 = \mathrm{E}[\cos^2\theta],
\end{aligned}
$$
where we introduced the notation
$$
  \mathrm{E}[f(\theta)] = \frac{1}{\pi}  \int_0^{\pi}  f(\theta)\,\phi_1(A+B\cos\theta)\,d\theta
$$
(for fixed $A$ and $B$).
The Jacobian determinant, therefore,  is  
$$
  \mathrm{E}[1]\,\mathrm{E}[\cos^2\theta] - \mathrm{E}^2[\cos\theta] > 0,
$$
because it  is proportional to the variance of $\cos\theta$ for the probability distribution
$\phi_1(A+B\cos\theta)/\left(\pi \mathrm{E}[1]\right)$.
\par
According to Hadamard theorem, a locally invertible map (with simply connected image) 
is global invertible if and only if it is proper, which, in the present case, means the following:
for every sequence $(A_k,B_k) \in \mR\times[0,+\infty)$ such that $A_k^2+B_k^2 \to \infty$, 
every set of the form $[n_m,n_M] \times[0,u_M] \subset (0,+\infty)\times[0,1)$ contains only
a finite number of points of the image $(n_k,u_k)$ of $(A_k,B_k)$.
Now, since $A_k^2+B_k^2 \to \infty$, we can approximate our map with its asymptotic representation 
given by Lemma \ref{Asym}.
\par
In the region $A < -B$, according to \eqref{BFasym},  for large $A^2+B^2$ we can write
$$
  \cI_0^2(A,B) \sim \e^A\,I_0(B), \qquad \cI_1^2(A,B) \sim \e^A\,I_1(B).
$$
Since $I_1(B)/I_0(B)$ is an increasing function of $B$ which tends to $1$ as $B\to+\infty$, the level lines
of $u$ in this region are (asymptotically) parallel to the $A$ axis (see Figure \ref{figure1}), with values of $u$ increasing
from 0 to 1 as $B$ increases.
\begin{figure}[t]
\includegraphics[width=\linewidth]{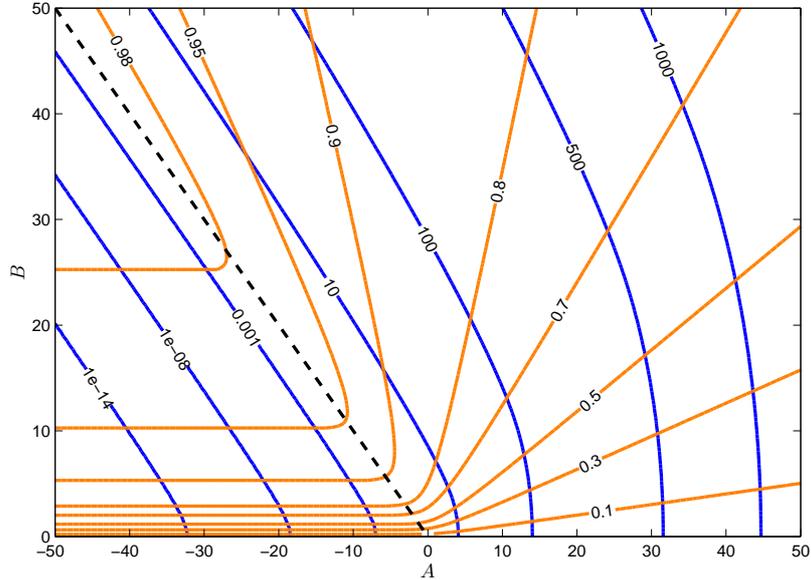}
\caption{A representation of the map $(A,B) \mapsto (n,u)$, where $n = \cI_0^2(A,B)$ and 
$u =  \cI_1^2(A,B)/\cI_0^2(A,B)$. 
The level lines of $n$ are in blue (color online) and are labeled from $10^{-14}$ to $1000$; the level
lines of $u$ are in orange (color online) and are labeled from $0.1$ to $0.98$. 
The dashed black line is the critical line  $A= -B$.}
\label{figure1}
\end{figure}
Moreover, since 
$$
  \frac{\pt_B \cI_0^2(A,B)}{\pt_A \cI_0^2(A,B)} \sim \frac{I_1(B)}{I_0(B)} = u,
$$
the level lines of $n$, as $B$ increases, tend to be parallel to the critical line $A = -B$
for increasing values of $n$.
\par
In the region $A > -B$, by \eqref{BFasym} we can write 
$$
    \cI_0^2 \sim \frac{1}{2\pi} \int_0^{C} (A+B\cos\theta)^2 d\theta, 
    \quad 
    \cI_1^2 \sim \frac{1}{2\pi} \int_0^{C} \cos\theta\,(A+B\cos\theta)^2 d\theta,
$$
with $C = C(A,B)$ given by \eqref{Cdef}.
Since these  are homogeneous functions of $(A,B)$ of degree $2$, then  $\cI_1^2/ \cI_0^2$ 
depend (asymptotically) only on the direction of the vector $(A,B)$ and, therefore, the level lines of $u$ are 
straight lines from the origin.
It can be directly verified that the values of $u$ increase counterclockwise from 0 (corresponding to $B=0$) to 1
(corresponding to $B=-A$, see Figure \ref{figure1}).
Moreover, from the above expression we get
$$
   \pt_A \cI_0^2 \sim AC + B\sin C, \qquad  \pt_B \cI_0^2 \sim \frac{1}{2}\left( BC + A\sin C\right),
$$
by which it is easy to show that the angle between $(\pt_An,\pt_Bn)$ and $(\pt_Au,\pt_Bu)$ is strictly positive
and monotonically decreasing from $\pi/2$ to 0 as $B$ goes from from $0$ to $-A$.
\par
Let us now consider a sequence $(A_k,B_k) \in \mR\times[0,+\infty)$ such that $A_k^2+B_k^2 \to \infty$.
If infinitely many points of the image sequence $(n_k,u_k)$ are such that $n_k \in [n_m,n_M]$,
then the corresponding infinitely many points of the sequence $(A_k,B_k)$ lie between the two 
level lines $n=n_m$ and $n=n_M$. 
But then, from the above discussion, we can deduce%
\footnote{It is necessary to distinguish three cases:
when both the level lines $n=n_m$ and $n=n_M$ are in the region $A<-B$, when both are in the region $A>-B$,
and when $n=n_m$ is in the first region while $n=n_M$ is in the second one.
Note, in fact, that the level lines of $n$ cannot cross (asymptotically) the critical line $A=-B$.}
that such points are forced to cross level lines of $u$ with strictly increasing values of $u$.
Hence, only finitely many points of $(n_k,u_k)$ will be contained in $[n_m,n_M] \times[0,u_M]$, which proves 
that the map is proper.
 \end{proof}
\subsection{Expression of the moments $P_{ij}$ and $Q_{ij}$}
\label{secPQ}
We shall now find an expression of the tensors $P_{ij}$ and $Q_{ij}$ (see definitions \eqref{S2} and
\eqref{S3}) in terms of the moments $n$ and $\uu$, and of the 
functions  $\cI_N^s = \cI_N^s(A,B)$.
Recalling also the definitions \eqref{bkdef}, \eqref{nT} and \eqref{phidef}, we have
$$
\begin{gathered}
  P_{ij} = \frac{1}{(2\pi\hbar)^2}\int_0^{2\pi} \int_0^{+\infty}  
  \frac{\nu_i\nu_j \abs{\pp}\,d\abs{\pp}\,d\theta}{\e^{\frac{c}{k_BT} \abs{\pp} - B\cos(\theta-\theta_B)  - A}+1}
\\[4pt]
  = \frac{n_T}{2\pi} \int_0^{2\pi} \nu_i(\theta+\theta_B)\, \nu_j(\theta+\theta_B)\,\phi_2(A+B\cos\theta) \,d\theta,
\end{gathered}
$$
where $\nu_1(\theta) = \cos\theta$ and $\nu_2(\theta) = \sin\theta$.
Using \eqref{thetaB}, we obtain
$$
  \nu_i(\theta+\theta_B)\, \nu_j(\theta+\theta_B) = 
  \frac{u_iu_j}{\abs{\uu}^2}\cos^2\theta + \frac{u_i^\perp u_j^\perp}{\abs{\uu}^2}\sin^2\theta,
$$
where
\BE
\label{uperp}
\uu^\perp = (-u_2,u_1).
\EE
Then, using $\cos^2\theta = \frac{1+\cos(2\theta)}{2}$ and $\sin^2\theta = \frac{1-\cos(2\theta)}{2}$, we can write
$$
\begin{gathered}
  P_{ij} = \frac{n_T}{\pi \abs{\uu}^2} \int_0^\pi \left[u_iu_j \frac{1+\cos(2\theta)}{2} 
  +  u_i^\perp u_j^\perp\frac{1-\cos(2\theta)}{2}\right] \phi_2(A+B\cos\theta) \,d\theta
\\[4pt]
  = \frac{n}{\abs{\uu}^2}\left( \frac{\cI_0^2+\cI_2^2}{2\cI_0^2}\, u_i u_j 
              + \frac{\cI_0^2-\cI_2^2}{2\cI_0^2}\, u_i^\perp u_j^\perp   \right),
\end{gathered}
$$
where we also used the first of equations \eqref{ABeq}.
\par
An analogous expression for $Q_{ij}$ can be readily written since, according to definition \eqref{S2},
the only changes with respect to the above expression for $P_{ij}$ are that $\nu_i$ is substituted by $\nu_i^\perp$
(and, correspondingly, $u_i$ by $u_i^\perp$),
and the extra $\abs{\pp}$ at the denominator makes the degree of $\phi_s$ decrease from $s=2$ to $s=1$.
In conclusion, we can state the following.
\begin{proposition}
The tensors $P_{ij}$ and $Q_{ij}$ have the following expressions:
\BE
\label{PQ}
\begin{aligned}
  &P_{ij} = \frac{n}{\abs{\uu}^2}\big( P\,  u_i u_j  + P_\perp u_i^\perp u_j^\perp \big),
  \\[6pt]
  &Q_{ij} = \frac{c}{k_BT}\frac{n}{\abs{\uu}^2}\big(Q\, u_i u_j + Q_\perp u_i^\perp u_j^\perp \big),
\end{aligned}
\EE
where the scalar functions $P(A,B)$, $P_\perp(A,B)$, $Q(A,B)$ and $Q_\perp(A,B)$ are given by
\BE
\label{PQcoeff}
\begin{aligned}
  &P = \frac{\cI_0^2+\cI_2^2}{2\cI_0^2},
  &\quad
  &P_\perp = \frac{\cI_0^2-\cI_2^2}{2\cI_0^2} = 1 - P,
  \\[6pt]
  &Q = \frac{\cI_0^1-\cI_2^1}{2\cI_0^2},
  &\quad
  &Q_\perp = \frac{\cI_0^1+\cI_2^1}{2\cI_0^2}.
  \end{aligned}
\EE
Moreover, the following inequalities hold:
\BE
\label{PQineq}
  \frac{1}{2} \leq P < 1,
  \qquad
  0 < Q < \frac{1}{2},
  \qquad
  0 < Q_\perp < 1-Q.
 \EE
\end{proposition}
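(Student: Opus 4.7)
The expressions for $P_{ij}$ and $Q_{ij}$ in \eqref{PQ}--\eqref{PQcoeff} are already obtained by the polar-coordinate computation that precedes the proposition, so the substantive content is the inequality system \eqref{PQineq}. I would extract it from three elementary properties of the integrals $\cI_N^s$.

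The first two come directly from the trigonometric identity $\cos(2\theta)=2\cos^2\theta-1=1-2\sin^2\theta$ inserted into \eqref{Idef}, which gives
\BE
\cI_0^s+\cI_2^s=\frac{2}{\pi}\int_0^\pi\cos^2\theta\,\phi_s(A+B\cos\theta)\,d\theta,\qquad \cI_0^s-\cI_2^s=\frac{2}{\pi}\int_0^\pi\sin^2\theta\,\phi_s(A+B\cos\theta)\,d\theta,
\EE
both strictly positive by positivity of $\phi_s$, so $\abs{\cI_2^s}<\cI_0^s$. To sharpen the upper side into $\cI_2^s\geq 0$ for $s\geq 1$, I would integrate \eqref{Idef} by parts against the antiderivative $\tfrac{1}{2}\sin(2\theta)$ (whose boundary values vanish), use $\phi_s'=\phi_{s-1}$, substitute $x=\cos\theta$, and fold the resulting integral at $x=0$, arriving at
\BE
\cI_2^s(A,B)=\frac{B}{\pi}\int_0^1 x\sqrt{1-x^2}\,\bigl[\phi_{s-1}(A+Bx)-\phi_{s-1}(A-Bx)\bigr]\,dx\geq 0,
\EE
since $\phi_{s-1}$ is increasing on $\mR$ for $s\geq 1$ (the base case $\phi_0(x)=(1+e^{-x})^{-1}$ is manifestly monotone, and $\phi_s'=\phi_{s-1}$ propagates monotonicity upward).

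The third property is the pointwise inequality $\phi_1(x)<\phi_2(x)$, which integrates to $\cI_0^1<\cI_0^2$. Setting $h=\phi_2-\phi_1$, a direct computation from $\phi_0(x)=(1+e^{-x})^{-1}$ and $\phi_{-1}=\phi_0'$ gives
\BE
h''(x)=\phi_0(x)-\phi_{-1}(x)=(1+e^{-x})^{-2}>0,
\EE
so $h$ is strictly convex on $\mR$; and since $\phi_s(x)\to 0$ as $x\to-\infty$ for every $s\geq 0$, we have $h(-\infty)=h'(-\infty)=0$, and strict convexity then forces first $h'>0$ and then $h>0$ on all of $\mR$.

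With these three ingredients the inequalities \eqref{PQineq} follow almost mechanically: $P\geq\tfrac{1}{2}$ is $\cI_2^2\geq 0$ and $P<1$ is $\cI_0^2-\cI_2^2>0$; $Q>0$ is $\cI_0^1>\cI_2^1$, while $Q<\tfrac{1}{2}$ follows from the chain $\cI_0^1-\cI_2^1<\cI_0^1<\cI_0^2$; finally $Q_\perp>0$ is $\cI_0^1+\cI_2^1>0$, and $Q_\perp<1-Q$ rearranges to $2\cI_0^1<2\cI_0^2$. I expect the pointwise comparison $\phi_1<\phi_2$ to be the only genuinely delicate point: the other bounds reduce to positivity of $\phi_s$ against the trigonometric identities, whereas this one truly relies on the explicit form of $\phi_{-1}$ together with the vanishing of all $\phi_s$ at $-\infty$.
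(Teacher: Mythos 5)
Your proposal is correct and follows essentially the same route as the paper: the formulas \eqref{PQ}--\eqref{PQcoeff} are taken from the preceding computation, $P\geq\frac12$ comes from $\cI_2^2\geq 0$ via monotonicity of the Fermi functions, $P<1$ and the positivity of $Q$, $Q_\perp$ come from the $\cos^2\theta$/$\sin^2\theta$ decomposition, and the remaining bounds rest on $\phi_1<\phi_2$. The only difference is that you supply full proofs of the two facts the paper merely asserts or cites --- the integration-by-parts identity showing $\cI_2^s\geq 0$, and the convexity argument for $\phi_2-\phi_1>0$ --- both of which check out.
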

\begin{proof}
It only remains to prove the inequalities \eqref{PQineq}.
Since $\phi_2(z)$ is strictly increasing with $z$ \cite{Lewin81}, it easily follows that $\cI_2^2 \geq 0$ and, then, 
$P \geq 1/2$.
This inequality is not strict since, from
\BE
\label{Bat0}
   \cI_N^s(A,0) = \phi_s(A) \delta_{N0},
\EE
we have that $\cI_2^2(A,0) = 0$ and then $P = 1/2$ for $B=0$.
Moreover, 
$$
  \frac{\cI_0^2+\cI_2^2}{2} =  \frac{1}{\pi}  \int_0^{\pi}  \cos^2(\theta)\,\phi_2(A+B\cos\theta)\,d\theta
  < \frac{1}{\pi}  \int_0^{\pi}  \phi_2(A+B\cos\theta)\,d\theta = \cI_0^2,
$$
which proves $P < 1$.
Similarly, using also the fact that $\phi_1(x) < \phi_2(x)$ for all $x\in \mR$, we can prove $0 < Q < 1/2$
and $Q_\perp > 0$.
Finally, we have  
$$
 Q + Q_\perp = \frac{\cI_0^1}{\cI_0^2} < 1
$$
(also following from $\phi_1(x) < \phi_2(x)$), which proves $Q_\perp < 1-Q$.
\end{proof}
\begin{remark}
\rm
All moments of the form $\bk{\nu_1^\ell \nu_2^m f_\eq}$ can be expressed in terms of the functions $\cI_N^s$.
In fact, similarly to what we have done to derive Eq.\ \eqref{PQ}, this expression can be reduced to a linear combination
of integrals of the form
$$
  \frac{1}{\pi}\int_0^\pi  r(\cos\theta) \phi_2(A+B\cos\theta) \,d\theta,
$$
where $r$ is a polynomial. 
Then, by using
$$
  r(\cos\theta) = \sum_{N=0}^{N(r)} \alpha_N(r)\, T_N(\cos\theta) 
  = \sum_{N=0}^{N(r)} \alpha_N(r)\cos(N\theta),
$$
where $T_N$ are the Chebyschev polynomial of the first kind (and the coefficients $\alpha_N(r)$ can
be computed, e.g., by means of the Clenshaw algorithm \cite{Fox68}), we see tat the above integral can be written as 
$ \sum_{N=0}^{N(r)} \alpha_N(r) \cI_N^2(A,B)$.
\end{remark}
\subsection{Series expansion of $\cI_N^s(A,B)$}
\label{Sec_series}
In this section we shall make use of the extension of the Fermi functions $\phi_s(x)$ to negative
values of $s$. 
In order to understand this extension, we recall that, for $s>0$, the Fermi integral 
\eqref{phidef} can be expressed as
\BE
\label{phiLi}
  \phi_s(x) = -\Li_s(-\e^x), \qquad x \in \mR,
\EE
where $\Li_s(z)$ denotes the polylogarithm of order $s$ \cite{Lewin81}.
The latter is defined, for all  $s \in \mR$ and $z$ in the complex unit disc,
by the power series
\BE
\label{pwrLi}
  \Li_s(z) = \sum_{k=1}^\infty \frac{z^k}{k^s}, \qquad \abs{z} < 1, 
\EE
and can be analytically continued to a larger domain (depending on $s$) which includes the real semi-axis 
$z \in (-\infty,1)$. 
Then, Eq.\ \eqref{phiLi} provides a definition of $\phi_s(x)$ as an analytic function of $x\in \mR$ 
for every $s\in \mR$. 
The power series expansion of $\phi_s(x)$ at $x=0$ converges for $\abs{x} < \pi$ and reads 
as follows \cite{Lewin81,Wood92}:
\BE
\label{psephi}
  \phi_s(x) = \sum_{k=0}^\infty  \frac{h(s-k)}{k!}\,x^k,
  \qquad \abs{x}<\pi,
\EE
where
$$
    h(s) = -\Li_s(-1) = (1-2^{1-s})\zeta(s)
$$
($\zeta$ denoting the Riemann zeta function).
\par
A series expansion of $\cI_N^s(A,B)$ can be easily obtained by considering the derivatives of
\eqref{Idef} with respect to $B$ at $B=0$:
$$
  \frac{\pt^j \cI_N^s}{\pt B^j}(A,0) = 
  \frac{\phi_{s-j}(A)}{\pi}  \int_0^{\pi} \cos(N\theta) \cos^j\!\theta\, d\theta
$$
(where property \eqref{deriphi}, which extends to every $s\in\mR$, was used).
But 
\begin{multline}
\label{BesselCoeff}
   \frac{1}{\pi}  \int_0^{\pi} \cos(N\theta) \cos^j\!\theta\, d\theta
   = \frac{\pt^j I_N}{\pt x^j}(0)
\\[4pt]
   = \left\{\begin{aligned}
&0,& &\text{if $0\leq j \leq N-1$,} 
\\
&0,& &\text{if $j = N+2n+1$, $n\geq 0$,} 
\\
&\frac{2^{-(2n+N)}}{(N+n)!},&\ &\text{if  $j = N+2n$, $n\geq 0$,} 
\end{aligned}
\right.
\end{multline}
as it follows from well known properties of the modified Bessel function of the first kind $I_N(x)$.
We obtain therefore the series
\BE
\label{Iexp1}
   \cI_N^s(A,B) = 
   \sum_{n = 0}^\infty  \frac{\phi_{s-2n-N}(A)}{n!\,(N+n)!} \left(\frac{B}{2}\right)^{N+2n},
\EE
whose convergence is uniform on every compact set in the $(A,B)$ plane.
\par
If, moreover, the expansion \eqref{psephi} is used, we can write the following power-series expansion
\BE
\label{Iexp2}
   \cI_N^s(A,B) = 
   \sum_{n,k = 0}^\infty  \frac{h(s-N-2n-k)}{k!\,n!\,(N+n)!\, 2^{N+2n}}\, A^k B^{N+2n},
   \qquad
   \abs{A} < \pi,
\EE
whose convergence is uniform in the compact sets of $(-\pi,\pi)\times \mR$.
\section{Asymptotic regimes}
\label{Sec_asymp}
Although in general we are not able to give the fluid equations \eqref{S1} an explicit form
(by which we basically mean that $P_{ij}$ and $Q_{ij}$ are expressed as functions of $n$ and $\uu$
in terms of elementary functions), it is nevertheless possible to find explicit asymptotic forms of 
equations \eqref{S1} in some particular regime, which will be considered in this section.
\subsection{Diffusive limit ($\abs{\uu}\to 0$)}
\label{Sec_diffu}
\subsubsection{Diffusive equations}
The diffusive regime corresponds to vanishing mean velocity $c\uu$, i.e.\ to the limit $\abs{\uu}\to 0$.
It is evident from \eqref{S1} that, without further assumptions, such limit  would lead to trivial fluid equations 
describing a fluid which does not evolve at all.
The reason is well known: Eq.\ \eqref{S1} has been obtained from Eq.\ \eqref{WE3} as a leading order 
approximation in the the hydrodynamic limit $\tau\to 0$ (see Sect.\ \ref{Sec_MomentClosure}) 
but, in order to observe the diffusion current, we have to look at the first-order in $\tau$ (this is the so-called
Chapman-Enskog expansion \cite{Cercignani88})
When doing so, however, we bring into play terms coming from $\ff_\perp$, because only at leading order 
(i.e.\ when $\ff = \ff_\eq$) such terms disappear. 
Unfortunately, the diffusive equations obtained in this way are singular and require, e.g., 
a parabolic regularization of the Hamiltonian \cite{Zamponi12}. 
This issue is certainly interesting, and worth a deeper investigation, but it goes beyond the scope of 
the present paper.
Then, let us follow here an alternative approach \cite{Lundstrom00}
by introducing in the moment equations \eqref{S1} 
a current-relaxation term that acts on a time-scale $\tau_0 \gg \tau$:
\BE
\label{MEscal}
\left\{
\begin{aligned}
&\pt_t n  + c\pt_i (n u_i) = 0,
\\[3pt]
&\pt_t (n u_i) + c\pt_j P_{ij}  = \pm F_jQ_{ij} - \frac{nu_i}{\tau_0}.
\end{aligned}
\right.
\EE
Rescaling time and direction field as 
$$
  t^* = \tau_0 t, \qquad \uu^* =  \uu/\tau_0,
$$
we obtain
\BE
\label{MEscal2}
\left\{
\begin{aligned}
&\pt_{t^*} n  + c\pt_i (n u_i^*) = 0,
\\[3pt]
&\tau_0^2 \,\pt_{t^*} (n u_i^*) + c\pt_j P_{ij}  = \pm F_jQ_{ij} - nu_i^*.
\end{aligned}
\right.
\EE
Note that $P_{ij}$ and $Q_{ij}$ (see definition \eqref{PQ}) remain unchanged under the scaling of $\uu$, 
except that the Lagrange multipliers have to satisfy
\BE
\label{ABdiff}
  \cI_0^2(A,B) = \frac{n}{n_T}, \qquad  \frac{\cI_1^2(A,B)}{ \cI_0^2(A,B)} = \tau_0\abs{\uu^*}.
\EE
As $\tau_0 \to 0$ we obtain the condition $\cI_1^2(A,B) = 0$, which is satisfied if and only if $B = 0$.
Then, by Eq.\ \eqref{Bat0}, $A$ is given by $\phi_2(A) = n/n_T$ and, in conclusion, we obtain
\BE
\label{invA}
  A = \phi_2^{-1}\left( \frac{n}{n_T} \right), \qquad B = 0.
\EE
Now, from \eqref{PQcoeff}, \eqref{Bat0} and \eqref{invA}, we have that
$P(A,0) = P_\perp(A,0) = \frac{1}{2}$ and 
$$
  Q(A,0) = Q_\perp(A,0) = \frac{\phi_1(A)}{2\phi_2(A)} 
  = \frac{n_T\,\phi_1\Big( \phi_2^{-1}\Big( \frac{n}{n_T} \Big) \Big)}{2n},
$$
which yields
$$
  P_{ij}(A,0) = \frac{n}{2}\, \delta_{ij},
  \qquad
  Q_{ij}(A,0) = \frac{c\, n_T}{2k_BT}\, \phi_1\Big( \phi_2^{-1}\Big( \frac{n}{n_T} \Big) \Big)\, \delta_{ij}.
$$
Then, letting $\tau_0 \to 0$ in Eq.\ \eqref{MEscal2}, we obtain the diffusive equation
\begin{multline*}
   \pt_{t^*} n = c \pt_i\left[ c\pt_j P_{ij}(A,0) \pm \pt_jV\, Q_{ij}(A,0)\right]
 \\[4pt]
   =   \frac{c^2}{2} \pt_i \left[\pt_i n 
   \pm \frac{n_T}{k_BT}\,\phi_1\Big( \phi_2^{-1}\Big( \frac{n}{n_T} \Big) \Big)\pt_i V \right],
\end{multline*}
(where we recall that $F_i = -\pt_iV$) that is, in terms of the original time variable,
\begin{multline}
\label{DEgeneral}
 \pt_t n =  \tau_0 c \pt_i\left[ c\pt_j P_{ij}(A,0) \pm \pt_jV\, Q_{ij}(A,0)\right]
 \\[4pt]
   = \frac{\tau_0 c^2}{2} \nabla\cdot \left[\nabla n 
   \pm \frac{n_T}{k_BT}\,\phi_1\Big( \phi_2^{-1}\Big( \frac{n}{n_T} \Big) \Big)\nabla V \right].
\end{multline}
This drift-diffusion equation has a nonconventional, and somehow specular, structure with respect to the
drift-diffusion equations for Fermions with parabolic dispersion relation 
\cite{JSP12,JungelKrausePietra11,TrovatoReggiani10}.
Indeed, the diffusion coefficient (which is proportional to the variance of the velocity distribution), is here
independent of the temperature $T$, because the particles move with constant speed $c$, while it is proportional 
to $T$ in the parabolic case.
On the contrary, the mobility coefficient (which is related to the distribution of the second derivative 
of the energy, i.e.\ to the effective-mass tensor) is here temperature-dependent 
while in the parabolic case is constant.
\subsubsection{Quasi-diffusive regime (linear response)}
The quasi-diffusive regime consists in a linear-response approximation with respect to $B$, 
which amounts to considering only first-order terms in $B$ in the expansion \eqref{Iexp1}.
We then have, up to $\cO(B^2)$,
\BE
\label{linres}
 \cI_0^s =   \phi_s(A), \qquad  \cI_1^s =   \frac{1}{2}\,\phi_{s-1}(A) B,
 \qquad \cI_N^s = 0, \quad N\geq 2.
\EE
Using this approximation in Eqs.\ \eqref{ABeq} and \eqref{PQ} yields
$$
  A = \phi_2^{-1}\left( \frac{n}{n_T} \right), \qquad 
  B = \frac{2n\abs{\uu}}{n_T\,\phi_1\Big( \phi_2^{-1}\Big( \frac{n}{n_T} \Big) \Big)},
$$
and
$$
  P_{ij} = \frac{n}{2}\, \delta_{ij},
  \qquad
  Q_{ij} = \frac{k_BT}{2\pi c\hbar^2}\, \phi_1\Big( \phi_2^{-1}\Big( \frac{n}{n_T} \Big) \Big)\, \delta_{ij}.
$$
By substituting these expressions in \eqref{S1}, and taking the derivative with respect to time of the continuity equation, 
we obtain a wave equation for $n$:
\BE
\label{Wave}
  \frac{\pt^2 n}{\pt t^2} = \frac{c^2}{2} \Delta n \pm \frac{k_BT}{2\pi \hbar^2}\,\nabla \cdot
  \left[ \phi_1\Big( \phi_2^{-1}\Big( \frac{n}{n_T} \Big) \Big)\nabla V  \right].
\EE 
\subsection{Maxwell-Boltzmann regime ($T\to +\infty$)}
\label{Sec_MB}
\subsubsection{Hydrodynamic equations}
Let us now consider the asymptotic form of system \eqref{S1}--\eqref{S4} for high temperature.
Since $n_T \sim T^2$ (recall definition \eqref{nT}), 
form the first of the constraint equations \eqref{ABeq} we obtain that
$$
   \cI_0^2(A,B) = \frac{n}{n_T} \to 0, \quad \text{as $T\to +\infty$}.
$$
According to the discussion performed in the proof of Theorem \ref{solvability}, this 
implies that $A^2+B^2 \to \infty$ with $A<-B$ and then, according to Lemma \ref{Asym}, 
we can use the asymptotic approximation 
\BE
\label{MBapprx}
   \cI_N^s(A,B) \sim \e^A I_N(B),
\EE
where $I_N$ denotes the modified Bessel function of the first kind and where, remarkably, 
the dependence on $s$ disappears.
It can be easily seen that this corresponds to approximating $f_\eq$ with the Maxwell-Boltzmann 
distribution
\BE
\label{MB}
  f_\eq \sim   \e^{-\frac{c}{k_BT} \abs{\pp} + \nnu\cdot\BB + A}.
\EE
By using \eqref{MBapprx}, the constraint equations \eqref{ABeq} become
\BE
\label{ABMB}
  \e^A I_0(B) = \frac{n}{n_T}, \qquad \frac{I_1(B)}{I_0(B)} = \abs{\uu}.
\EE
Note, in particular, that $B$ only depends on $\abs{\uu}$ (this can be well visualized in Figure \ref{figure1}
in the region below the critical line $A = -B$ where, for large $A^2+B^2$, the level lines of $u = \abs{\uu}$ are parallel 
to the $A$-axis).
Moreover, the coefficients $P$, $P_\perp$, $Q$ and $Q_\perp$ take the simple form
\BE
\label{PQMB_0}
   P = 1-P = Q_\perp = 1 - Q = \frac{I_0(B) + I_2(B)}{2I_0(B)}
\EE
(in particular, they only depend  on $\abs{\uu}$).
Then, by introducing the function $X(\abs{\uu})$ defined by
\BE
\label{Xdef}
  X(\abs{\uu}) = \frac{I_0(B) + I_2(B)}{2I_0(B)}, \qquad
   B = \Big(\frac{I_1}{I_0}\Big)^{-1}(\abs{\uu})
\EE
(where we recall that $0\leq \abs{\uu} < 1$), we obtain the Maxwell-Boltzmann asymptotic form 
of Eq.\  \eqref{PQ}:
\BE
\label{PQMB}
\begin{aligned}
&P_{ij} = \frac{n}{\abs{\uu}^2} \left[ X(\abs{\uu})u_iu_j + \left(1-X(\abs{\uu})\right)u_i^\perp u_j^\perp \right],
\\[4pt]
&Q_{ij} =  \frac{c}{k_BT}\frac{n}{\abs{\uu}^2} \left[ \left(1-X(\abs{\uu})\right) u_iu_j + X(\abs{\uu}) u_i^\perp u_j^\perp \right].  
\end{aligned}
\EE
The function $X(\abs{\uu})$ is plotted in Fig.\ \ref{figure2}. 
Note that  $X(\abs{\uu})$ increases monotonically from $1/2$ to $1$ as $\abs{\uu}$ increases from 0 to 1.
The two extreme points  $\abs{\uu} = 0$ and $\abs{\uu} = 1$ correspond, respectively, to $B = 0$ and $B \to +\infty$
and yield, respectively, the diffusive limit and the collimation limit, as discussed below.
\subsubsection{Diffusive equations}
This limit can be easily obtained from  the general drift-diffusion equation \eqref{DEgeneral},
either from the Maxwell-Boltzmann approximation of the functions $\phi_s$ ($\phi_s(x) \sim \e^x$)
of by writing $P_{ij}$ and $Q_{ij}$ with the coefficients $P$, $P_\perp$, $Q$ and $Q_\perp$ 
as given by \eqref{PQMB_0}.
We obtain in this way the following drift-diffusion equation:
\BE
\label{DEMB}
 \pt_t n = \frac{\tau_0 c^2}{2} \nabla\cdot 
 \left(\nabla n \pm \frac{n}{k_BT}\,\nabla V \right).
\EE
In addition to the comments made about Eq.\ \eqref{DEgeneral}, we remark here the linear dependence 
of the mobility coefficient on $n$.
This reflects the linear dependence  on the density of the Maxwell-Boltzmann distribution. 
\subsubsection{Collimation limit ($\abs{\uu} \to 1$)}
\label{ColliMB}
On the opposite side with respect to the diffusive limit we find the limit of completely non-spread directions, 
corresponding to $\abs{\uu} \to 1$.
In other words, this is the limit in which the Maxwell-Boltzmann distribution \eqref{MB} becomes concentrated along a
($(\xx,t)$-dependent) direction in the $\pp$-space (the direction determined by $\uu$).
We term this regime the ``collimation limit''.
It is worth remarking that the collimation regime is only possible in the $T\to +\infty$ limit, considered here, and in the 
$T\to 0$ limit, to be considered next.
In fact, as it emerges from the proof of Theorem \ref{solvability}, $\abs{\uu} \to 1$ only when $A^2+B^2 \to \infty$ 
and the critical line $A = -B$ is approached from below or from above.
\par
Since $X(\abs{\uu}) \to 1$ as $\abs{\uu} \to 1$, from Eq.\ \eqref{PQMB} we obtain
$$
  P_{ij} \to nu_iu_j, \qquad Q_{ij} \to \frac{c}{k_BT} n u_i^\perp u_j^\perp.
$$
Then, the second of Euler equations \eqref{S1} reduces to
$$
  \pt_t (n u_i) + c\pt_j (nu_iu_j)  = \pm \frac{c}{k_BT}\, nu_i^\perp F_ju_j^\perp,
$$
which, by using the continuity equation $\pt_t n + c\pt_j(nu_i)$, can  be rewritten as
\BE
\label{MBPE}
   \pt_t u_i + cu_j\pt_j u_i = \pm \frac{c}{k_BT}\,u_i^\perp F_ju_j^\perp.
\EE 
We see, therefore, that the equation for $\uu$ decouples from the continuity equation.
A simple computation shows that Eq.\ \eqref{MBPE} is compatible with the assumption of collimation.
In fact, multiplying by $2u_i$ both sides of \eqref{MBPE} and summing up over $i$ yields the equation
$$
  \pt_t \abs{\uu}^2 + c \uu \cdot \nabla \abs{\uu}^2 = 0,
$$
from which we see (e.g.\ using characteristics) that, for any regular solution $\uu(\xx,t)$ of Eq.\ \eqref{MBPE} such that 
$\abs{\uu(\xx,0)} = 1$ for all $\xx$, we have $\abs{\uu(\xx,t)} = 1$ for all $(\xx,t)$ where the solution exists. 
\begin{remark}
\label{GeomRem}
\rm
Let us consider the stationary version of Eq.\ \eqref{MBPE}, that we rewrite as follows:
$$
   (\uu\cdot \nabla) \uu \pm (\uu^\perp\cdot\nabla K)\uu^\perp = 0,
   \qquad 
   K = \frac{V}{k_BT}.
$$ 
If $\abs{\uu} = 1$, by substituting  $\uu = (\sin\varphi,\cos\varphi)$ in the last equation we obtain
the conservation law
\BE
\label{geom_optics}
  \nabla\cdot\left(\e^{\mp K} \uu_\perp\right) = 0.
\EE
Equation \eqref{geom_optics} reveals that the collimation regime has the properties of a geometrical-optics system,
with $\e^{\mp K}$ playing the role of the refractive index. 
For example, if $V = V(x_1)$ is a potential step of height $\delta V$ at $x_1 = 0$, Eq.\ \eqref{geom_optics}
implies the Snell law for incident and refracted angles:
$$
     \frac{\sin\varphi_i}{\sin\varphi_r} = \e^{\mp\frac{\delta V}{k_BT}}
$$
(in particular, the region at higher potential has lower refractive index for electrons, and conversely for holes).
Note that, within our semiclassical model, we always find a positive refractive index; 
however, the possibility if a negative refractive index arises from a fully quantum description \cite{CheianovEtAl2007}.
\end{remark}
\subsection{Degenerate gas limit ($T\to 0$)}
\label{Sec_DG}
\subsubsection{Hydrodynamic equations}
We now consider the limit of the fluid model \eqref{S1}--\eqref{S4} when $T \to 0$, describing a so-called
degenerate electron/hole gas.
Since we now have
$$
   \cI_0^2(A,B) = \frac{n}{n_T} \to +\infty, \quad \text{as $T\to 0$},
$$
then we know from Sect.\ \ref{Sec_solvability} that $A^2+B^2 \to \infty$ with $A>-B$, and we can use the 
asymptotic approximation
\BE
\label{T0apprx}
\cI_N^s(A,B) \sim
 \frac{1}{\pi\Gamma(s+1)} \int_0^{C(A,B)} \hspace{-16pt} \cos(N\theta) (A+B\cos\theta)^s d\theta,
\EE
where $C(A,B)$ is given by \eqref{Cdef}.
It is convenient to introduce polar coordinates in the $(A,B)$-plane,
$$
   A = R\cos\psi, \qquad B = R\sin\psi,
$$
and rewrite \eqref{T0apprx} as follows:
\BE
\label{T0apprxPolar}
\cI_N^s(R\cos\psi,R\sin\psi) \sim R^s \cF_N^s(\psi), \qquad R > 0, \quad 0 \leq  \psi  < \frac{3\pi}{4},
\EE
where
\BE
\label{cFdef}
\cF_N^s(\psi) =  \frac{1}{\pi\Gamma(s+1)} \int_0^{C(\psi)} \hspace{-16pt} \cos(N\theta) (\cos\psi+\sin\psi\,\cos\theta)^s d\theta
\EE
and
\BE
\label{Cdef2}
 C(\psi)
   = \left\{
  \begin{aligned}
  &\arccos\left(-\cot\psi \right),& &\text{if $\frac{\pi}{4} < \psi < \frac{3\pi}{4}$},
  \\[2pt]
  &\pi,& &\text{if $0\leq \psi \leq \frac{\pi}{4}$}.
  \end{aligned}
  \right.
\EE
The asymptotic form of the constraint equations \eqref{ABeq} is now 
\BE
\label{ABT0}
  R^2\cF_0^2(\psi) = \frac{n}{n_T}, \qquad \frac{\cF_1^2(\psi)}{\cF_0^2(\psi)} = \abs{\uu},
\EE
from which we see, in particular, that $\abs{\uu}$ only depends on $\psi$.
Moreover, \eqref{PQcoeff} becomes
\BE
\label{PQcoeff2}
  P = \frac{\cF_0^2+\cF_2^2}{2\cF_0^2} = 1 - P_\perp,
  \quad
  Q = \frac{1}{R} \frac{\cF_0^1-\cF_2^1}{2\cF_0^2},
  \quad
  Q_\perp =  \frac{1}{R} \frac{\cF_0^1+\cF_2^1}{2\cF_0^2}.
\EE
Using \eqref{ABT0} and \eqref{PQcoeff2}, we obtain the asymptotic form of the terms $P_{ij}$ and $Q_{ij}$
for the degenerate gas:
\BE
\label{PQT0}
\begin{aligned}
&P_{ij} = \frac{n}{\abs{\uu}^2} \left[ Y(\abs{\uu})u_iu_j + \left(1-Y(\abs{\uu})\right)u_i^\perp u_j^\perp \right],
\\[4pt]
&Q_{ij} =  \frac{\sqrt{n}}{\hbar\sqrt{\pi}\abs{\uu}^2} 
\left[ Z(\abs{\uu}) u_iu_j + Z_\perp(\abs{\uu}) u_i^\perp u_j^\perp \right], 
\end{aligned}
\EE
where the functions $Y( \abs{\uu})$, $Z(\abs{\uu})$ and  $Z_\perp(\abs{\uu})$ are defined, for  
$0\leq \abs{\uu} < 1$,
by
\BE
\label{YZdef}
\begin{gathered}
  Y(\abs{\uu}) = \frac{\cF_0^2(\psi) + \cF_2^2(\psi) }{2\cF_0^2(\psi)},
\\[4pt]
  Z(\abs{\uu}) = \frac{\cF_0^1(\psi) - \cF_2^1(\psi) }{2\sqrt{2\cF_0^2(\psi)}},
\qquad
Z_\perp(\abs{\uu}) = \frac{\cF_0^1(\psi) + \cF_2^1(\psi) }{2\sqrt{2\cF_0^2(\psi)}},
\end{gathered}
\EE
and
$$
   \psi = \Big(  \frac{\cF_1^2}{\cF_0^2}  \Big)^{-1}(\abs{\uu}).
$$
A plot of the functions  $Y$, $Z$ and $Z_\perp$ is shown in Figure \ref{figure2}. 
\begin{figure}[h]
\includegraphics[width=\linewidth]{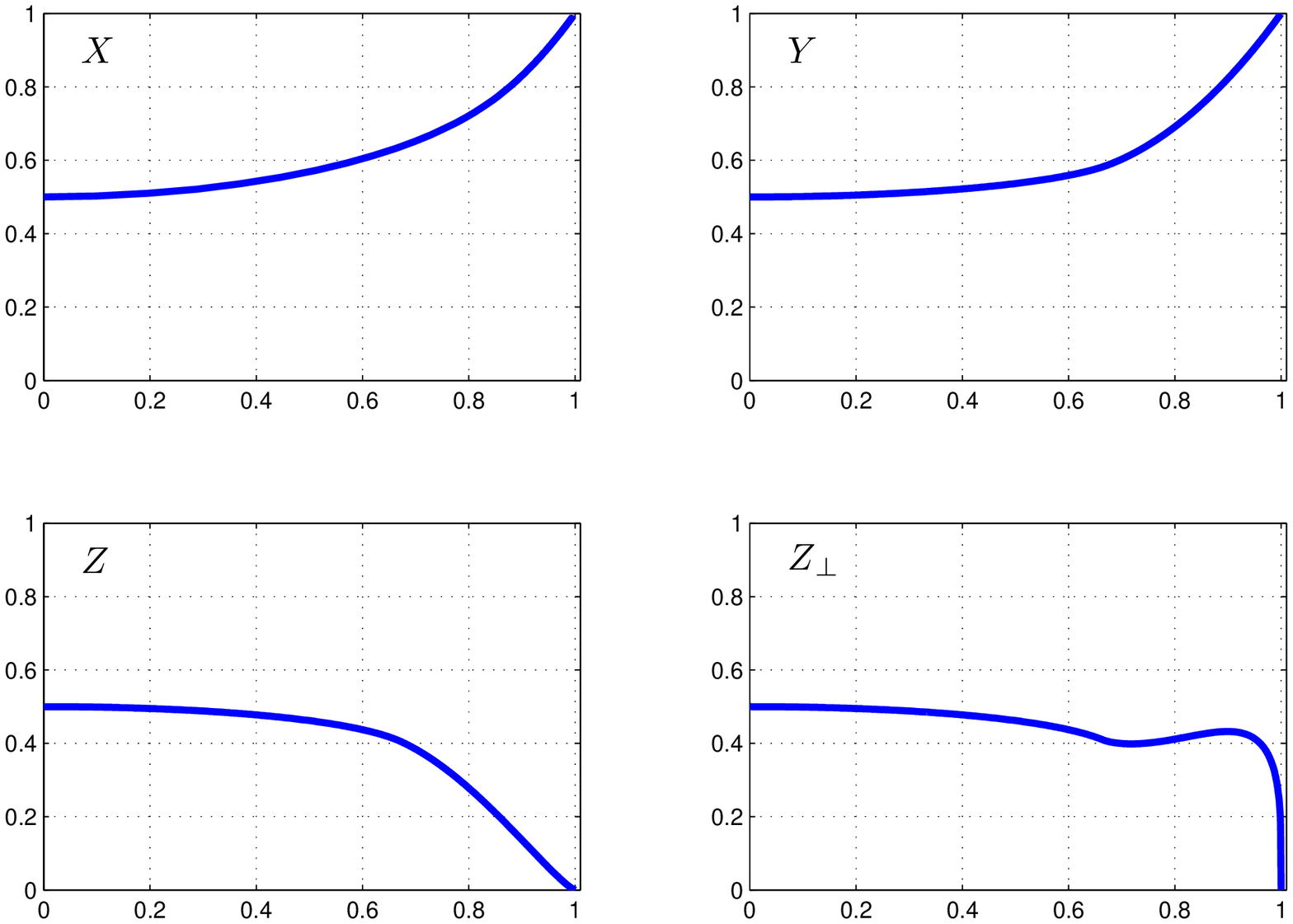}
\caption{Plot of the functions $X(\abs{\uu})$, $Y(\abs{\uu})$, $Z(\abs{\uu})$ and $Z_\perp(\abs{\uu})$
(see definitions \eqref{Xdef} and  \eqref{YZdef}).}
\label{figure2}
\end{figure}
Moreover, the behaviour of $Y$, $Z$ and $Z_\perp$ in the two limits $\abs{\uu} \to 0$ (diffusive) and
$\abs{\uu} \to 1$ (collimation) is discussed in the following Theorem.
\begin{theorem}
\label{Theo_YZ}
For $\abs{\uu} \to 0$ we have
\BE
\label{YZu0}
\begin{aligned}
   Y(\abs{\uu})& = \frac{1}{2} + \frac{1}{8}\abs{\uu}^2 + \cO(\abs{\uu}^4),
\\[4pt]
  Z(\abs{\uu})& =  \frac{1}{2} - \frac{1}{8}\abs{\uu}^2
    + \cO(\abs{\uu}^4).
\\[4pt]  
    Z_\perp(\abs{\uu})& = \frac{1}{2} - \frac{1}{8}\abs{\uu}^2
    + \cO(\abs{\uu}^4).
  \end{aligned}
\EE
For $\abs{\uu} \to 1$ we have
\BE
\label{YZu1}
 \begin{aligned}
   Y(\abs{\uu})& = 2\abs{\uu} - 1 + \cO\big((1-\abs{\uu})^2\big),
   \\[4pt]
  Z(\abs{\uu})&  = \frac{(14)^\frac{5}{4}}{\sqrt{30\pi}}\,(1 - \abs{\uu})^\frac{5}{4}
  + \cO\big((1-\abs{\uu})^\frac{9}{4}\big),
    \\[4pt]
    Z_\perp(\abs{\uu})& = \frac{\sqrt{5}\, (14)^\frac{1}{4}}{\sqrt{6\pi}}\,(1 - \abs{\uu})^\frac{1}{4}
  + \cO\big((1-\abs{\uu})^\frac{5}{4}\big).
  \end{aligned}
\EE
\end{theorem}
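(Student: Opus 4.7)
The plan is to work throughout with the variable $\psi$ determined implicitly by $\abs{\uu}=\cF_1^2(\psi)/\cF_0^2(\psi)$ (see \eqref{ABT0}) and to analyze the limits $\psi\to 0^+$ (which gives $\abs{\uu}\to 0$) and $\psi\to(3\pi/4)^-$ (which gives $\abs{\uu}\to 1$) separately. In each case I compute the asymptotic form of the relevant functions $\cF_N^s$, use \eqref{YZdef} to obtain $Y,Z,Z_\perp$ in terms of $\psi$, then invert the relation between $\psi$ and $\abs{\uu}$ to extract the stated expansions.

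For $\abs{\uu}\to 0$, one has $\psi<\pi/4$ and hence $C(\psi)=\pi$, so the integrals in \eqref{cFdef} are evaluated exactly by the binomial theorem for $s=1,2$: direct calculation gives $\cF_0^1=\cos\psi$, $\cF_2^1=0$, $\cF_0^2=\tfrac12-\tfrac14\sin^2\psi$, $\cF_1^2=\tfrac14\sin(2\psi)$, and $\cF_2^2=\tfrac18\sin^2\psi$. Inserting these into \eqref{YZdef} yields $Y$, $Z$, $Z_\perp$ as elementary functions of $\psi$. Inverting $\abs{\uu}=\sin(2\psi)/(2-\sin^2\psi)$ as a power series gives $\psi=\abs{\uu}+\abs{\uu}^3/6+O(\abs{\uu}^5)$, and substitution produces \eqref{YZu0}.

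For $\abs{\uu}\to 1$, I set $\psi=\tfrac{3\pi}{4}-\epsilon$ with $\epsilon\to 0^+$. Elementary trigonometric identities give $\cos\psi+\sin\psi=\sqrt 2\,\sin\epsilon$ and $-\cot\psi=1-2\epsilon+O(\epsilon^2)$, so $C(\psi)=\arccos(-\cot\psi)=2\sqrt\epsilon+O(\epsilon^{3/2})$. Writing $\cos\psi+\sin\psi\cos\theta=(\cos\psi+\sin\psi)-\sin\psi(1-\cos\theta)$ and rescaling $\theta=2\sqrt\epsilon\,\tau$ with $\tau\in[0,1]$ brings the integrand to $\sqrt 2\,\epsilon(1-\tau^2)+O(\epsilon^2)$, while $\cos(N\theta)=1-2N^2\epsilon\,\tau^2+O(\epsilon^2)$. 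Using the Beta-integral identities $\int_0^1(1-\tau^2)^s d\tau=\sqrt\pi\,\Gamma(s+1)/(2\Gamma(s+3/2))$ and $\int_0^1\tau^2(1-\tau^2)^s d\tau=\sqrt\pi\,\Gamma(s+1)/(4\Gamma(s+5/2))$, I obtain
\[
\cF_N^s(\psi)=\frac{(\sqrt 2)^s\,\epsilon^{s+1/2}}{\sqrt\pi\,\Gamma(s+3/2)}-\frac{N^2(\sqrt 2)^s\,\epsilon^{s+3/2}}{\sqrt\pi\,\Gamma(s+5/2)}+O\bigl(\epsilon^{s+5/2}\bigr).
\]
From this I read $\abs{\uu}=\cF_1^2/\cF_0^2=1-\tfrac{2}{7}\epsilon+O(\epsilon^2)$, so $\epsilon=\tfrac{7}{2}(1-\abs{\uu})+O((1-\abs{\uu})^2)$. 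Substituting into $Y$, $Z$, $Z_\perp$ and simplifying the Gamma-function ratios ($\Gamma(7/2)=15\sqrt\pi/8$, $\Gamma(9/2)=105\sqrt\pi/16$, $\Gamma(5/2)=3\sqrt\pi/4$) yields \eqref{YZu1}.

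The main obstacle lies in the $(1-\abs{\uu})^{5/4}$ rate for $Z$: since the leading-order value of $\cF_N^s$ given by the Beta integral is $N$-independent, the numerator $\cF_0^1-\cF_2^1$ vanishes at leading order, and the correct coefficient is produced only by the subleading $N^2\epsilon$ correction coming from $\cos(N\theta)=1-2N^2\epsilon\tau^2+\dots$. I therefore have to track this expansion uniformly in $\tau\in[0,1]$ and verify that higher-order corrections to $C(\psi)$ and to $\cos\psi\pm\sin\psi$ near $\psi=3\pi/4$ do not contaminate the balance. By contrast, $Z_\perp\sim(1-\abs{\uu})^{1/4}$ comes from the non-cancelling sum $\cF_0^1+\cF_2^1\sim 2\cF_0^1$ divided by $\sqrt{2\cF_0^2}\sim\epsilon^{5/4}$, and the $\abs{\uu}\to 0$ expansion is straightforward algebra on the explicit trigonometric formulas for $\cF_N^s$.
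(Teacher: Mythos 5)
Your proposal is correct and follows essentially the same route as the paper: exact binomial evaluation of the $\cF_N^s$ for $C(\psi)=\pi$ in the diffusive limit, and an endpoint asymptotic expansion of the shrinking integral as $\psi\to\tfrac{3\pi}{4}$, with the collimation exponents produced exactly as you describe (cancellation of the $N$-independent leading terms in $\cF_0^1-\cF_2^1$, leaving the $N^2$ correction from $\cos(N\theta)$). The only differences are cosmetic: the paper uses $C$ itself as the small parameter and the exact factorization $\cos\psi+\sin\psi\cos\theta=\sin\psi(\cos\theta-\cos C)$, which makes the endpoint bookkeeping automatic, whereas your rescaling $\theta=2\sqrt{\epsilon}\,\tau$ leaves residual $N$-independent terms of order $\epsilon^{s+3/2}$ (from $\sin\psi$, from the quartic part of $\cos\theta$, and from $C\neq 2\sqrt{\epsilon}$) that your displayed formula for $\cF_N^s$ omits while claiming error $O(\epsilon^{s+5/2})$; since these terms are the same for all $N$, they drop out of every ratio and difference you actually use, and all the stated coefficients check out.
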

\begin{proof}
For $\abs{\uu} \to 0$, we already know that $\psi \to 0$.
Then, we have $C(\psi) = \pi$ and Eq.\ \eqref{cFdef} can be rewritten as follows:
$$
  \cF_N^s(\psi) =  \frac{\cos^s\!\psi}{\Gamma(s+1)} \sum_{j=0}^s \binom{s}{j} \tan^j\!\psi 
  \,\frac{1}{\pi}\! \int_0^\pi\cos(N\theta) \cos^j\!\theta \,d\theta,
$$
where the integral is a Bessel coefficient, given by Eq.\ \eqref{BesselCoeff}.
This allows to easily compute the Taylor expansion of $\cF_N^s(\psi)$ and of the associated functions. 
We obtain, in particular:
$$
\begin{gathered}
    \frac{\cF_1^2(\psi)}{\cF_0^2(\psi)}  = \psi + \cO(\psi^3),
  \qquad
    \frac{\cF_0^2(\psi) + \cF_2^2(\psi) }{2\cF_0^2(\psi)} = \frac{1}{2} + \frac{1}{8}\psi^2 + \cO(\psi^4),
\\[4pt]
     \frac{\cF_0^1(\psi) \pm \cF_2^1(\psi) }{2\sqrt{2\cF_0^2(\psi)}} = \frac{1}{2} - \frac{1}{8}\psi^2 + \cO(\psi^4),
\end{gathered}
$$
from which, using \eqref{ABT0} and \eqref{YZdef}, Eq.\ \eqref{YZu0} follows.
\par
For $\abs{\uu} \to 1$, we know that $\psi \to \frac{3\pi}{4}$. 
Then $C(\psi) = \arccos\left(-\cot\psi\right)$, and $C$ can be used as independent variable
 (note that the limit $\psi \to \frac{3\pi}{4}^-$ corresponds to $C \to 0^+$).
 Equation \eqref{cFdef} is therefore rewritten as 
 $$
 \cF_N^s =  \frac{\sin^s\!\psi}{\pi\Gamma(s+1)} \int_0^{C} \hspace{0pt} \cos(N\theta) (\cos\theta - \cos C)^s d\theta,
$$
which makes easier the computation of Taylor expansions (around $C = 0$) in this case.
In particular, after some straightforward algebra, we obtain
$$
\begin{aligned}
    &\frac{\cF_1^2}{\cF_0^2}  = 1 - \frac{1}{14}C^2+ \cO(C^4),
   &\quad
   &\frac{\cF_0^2+ \cF_2^2 }{2\cF_0^2} = 1 - \frac{1}{7}C^2 + \cO(C^4),
\\[4pt]
   &\frac{\cF_0^1 + \cF_2^1 }{2\sqrt{2\cF_0^2}} = \frac{\sqrt{5}}{\sqrt{6\pi}} C^\frac{1}{2} + \cO(C^\frac{5}{2}),
  &\quad
   &\frac{\cF_0^1 - \cF_2^1 }{2\sqrt{2\cF_0^2}} = \frac{1}{\sqrt{30\pi}} C^\frac{5}{2} + \cO(C^\frac{9}{2}),
\end{aligned}
$$
which, using \eqref{ABT0} and \eqref{YZdef}, yields Eq.\ \eqref{YZu1}.
\end{proof}
\subsubsection{Diffusive equations}
As already remarked in the Maxwell-Boltzmann case, also the diffusive limit for the degenerate gas 
can be obtained from  the general drift-diffusion equation \eqref{DEgeneral},
either by the vanishing-temperature approximation of the functions $\phi_s$ 
($\phi_s(x) \sim \frac{x^s}{\Gamma(s+1)}$)
of by writing $P_{ij}$ and $Q_{ij}$ with the coefficients $P$, $P_\perp$, $Q$, $Q_\perp$ 
as given by \eqref{PQcoeff2}.
The drift-diffusion equation we get is the following:
\BE
\label{DEDG}
 \pt_t n = \frac{\tau_0 c}{2}\, \nabla\cdot 
 \left(c\nabla n \pm \frac{1}{\hbar\,\sqrt{\pi}}\,\sqrt{n}\,\nabla V \right).
\EE
This has to be compared with the diffusive equations for a degenerate Fermi gas 
with standard (parabolic) dispersion relation
\cite{JSP12,JungelKrausePietra11,TrovatoReggiani10} and, again, we remark that the conical dispersion
relation ``inverts'' the structure of diffusion and mobility coefficients. 
In particular, here, the nonlinear dependence on $n$ is in the mobility term and not in the diffusion term, 
as it happens in the parabolic case.
\subsubsection{Collimation limit}
Contrarily to what happens in the Maxwell-Boltzmann case (Sect.\ \eqref{ColliMB}), the collimation
limit for the degenerate gas leads to trivial equations, to the extent that all the force terms vanish.
This can be easily seen from Eq.\ \eqref{YZu1}, which implies that
$Y(\abs{\uu}) \to 1$, $Z(\abs{\uu}) \to 0$ and  $Z_\perp(\abs{\uu}) \to 0$, as $\abs{\uu} \to 1$. 
Then we obtain $P_{ij} \to nu_iu_j$ and  $Q_{ij} \to 0$, and system \eqref{S1} reduces to the inviscid 
Burger's equation $\pt_t u_i + cu_j\pt_j u_i = 0$, independently of the force field.
\subsection*{Acknowledgements}
This work was supported by MIUR National Project {\em Kinetic and hydrodynamic equations of complex collisional 
system} (PRIN 2009, Prot.\ n.\ 2009NAPTJF\_003) as well as by INdAM-GNFM, Progetto Giovani Ricercatori 2013 
{\em Quantum fluid-dynamics of identical particles:  analytical and numerical study}.

\end{document}